\numberwithin{equation}{section}
\theoremstyle{theorem}
   \newtheorem{theorem}[equation]{Theorem}
   \newtheorem{lemma}[equation]{Lemma}
   \newtheorem{proposition}[equation]{Proposition}
   \newtheorem{corollary}[equation]{Corollary}
\theoremstyle{definition}
\theoremstyle{remark}
   \newtheorem{remark}[equation]{Remark}
\newcommand{\R}{\mathbb{R}}
\newcommand{\N}{\mathbb{N}}
\newcommand{\E}{\mathbb{E}}
\newcommand{\X}{\mathbb{X}}
\newcommand{\tensor}{\otimes}
\newcommand\new[1]{#1}
\DeclareMathOperator{\atan}{atan}
\DeclareMathOperator{\ins}{ins}
\DeclareMathOperator{\pin}{pin}
\title{Persistent homology detects curvature}
\author{Peter Bubenik}
\address{Department of Mathematics, University of Florida}
\email{peter.bubenik@ufl.edu}
\urladdr{https://people.clas.ufl.edu/peterbubenik/}
\author{Michael Hull}
\address{Department of Mathematics \& Statistics, University of North Carolina at Greensboro}
\email{mbhull@uncg.edu}
\urladdr{https://mathstats.uncg.edu/people/directory/michael-hull/}
\author{Dhruv Patel}
\address{Department of Statistics, University of North Carolina -- Chapel Hill}
\email{dhruvpat@live.unc.edu}
\author{Benjamin Whittle}
\subjclass[2010]{55N99}
\keywords{topological data analysis, persistent homology, average persistence landscape}
\begin{document}

\maketitle

\begin{abstract}
  In topological data analysis, persistent homology is used to study the ``shape of data''.
  Persistent homology computations are completely characterized by a set of intervals called a bar code. It is often said that the long intervals represent the ``topological signal'' and the short intervals represent ``noise''. We give evidence to dispute this thesis, showing that the short intervals encode  geometric information. Specifically, we prove that persistent homology detects the curvature of disks from which points have been sampled. We describe a general computational framework for solving inverse problems using the average persistence landscape, a continuous mapping from metric spaces with a probability measure to a Hilbert space. In the present application, the average persistence landscapes of points sampled from disks of constant curvature results in a path in this Hilbert space which may be learned using standard tools from statistical and machine learning.
\end{abstract}

\section{Introduction} \label{sec:introduction}

Persistent homology is an important tool of topological data analysis (TDA).
A goal of TDA is to summarize and learn from the ``shape of data''.
Often this ``shape'' is interpreted as the topological structure, such as the number of connected components and other homological features such as holes and voids.
However, persistent homology is also sensitive to geometry.

The result of a persistent homology computation may be summarized as a set of intervals called a \emph{bar code} or a set of points $(x,y)$ with $x <y$ called a \emph{persistence diagram}. These give the parameter values for which a homological feature persists. In either case, one hopes to use this summary to make inferences on the underlying object from which the data has been sampled. An oft-repeated philosophy is that the long intervals in the bar code or the points distant to the diagonal in the persistence diagram represent the ``topological signal'' while the short intervals or the points close to the diagonal represent ``noise''.
 
However, TDA has been used to understand geometric structures in many applications, such as:
force networks in particulate systems~\cite{Kramar:2014,Kondic:2016};
protein compressibility~\cite{Gameiro:2015b}; 
fullerene molecules~\cite{Xia:2015};
amorphous solids~\cite{Hiraoka:2016};
the dynamics of flow patterns~\cite{Kramar:2016};
phase transitions~\cite{Donato:2016};
sphere packing and colloids~\cite{Robins:2016};
brain arteries~\cite{bendich:brain-artery};
craze formation in glassy polymers~\cite{Ichinomiya:2017}; 
\new{branching neuronal morphologies~\cite{Kanari2017};}
and
pores in rocks~\cite{Jiang:2018}.
\new{
In these examples, the relevant geometry is the local embedding of the underlying object or the local spatial arrangement of the analyzed object.}

\new{
Here we will consider the curvature of the underlying object.
We will prove that the short intervals in the bar code can be used to infer the curvature of the underlying object that has been sampled.}
Furthermore, we will present a general framework for solving inverse problems using a continuous mapping of bar codes or persistence diagrams to a Hilbert space, called the average persistence landscape~\cite{Bub,Chazal:2015b}. We will apply this framework to learning curvature.

\subsection{Theoretical results: short bars detect geometry}
\label{sec:short-bars}

Let $D_K$ denote the unit disk in the surface of constant curvature $K$, with $K \in [-2,2]$.
For $K=0$, $K=1$, and $K=-1$, these surfaces are the Euclidean plane, the unit sphere, and the hyperbolic plane. All of these disks are contractible, so their reduced singular homology is trivial, and thus homology is unable to distinguish between them. In fact, the spaces are homeomorphic.
Endow $D_K$ with the probability measure proportional to the surface area measure. 
We will show that the persistent homology of points sampled from $D_K$ can both recover $K$ in theory and effectively estimate $K$ in practice.

We prove that for three points sampled from $D_K$ the persistence of the corresponding cycle in the \emph{\v Cech complex} is largest when the points are pairwise equidistant (Theorem~\ref{thm:isoperimetric}).
\new{
  Furthermore if this pairwise distance is fixed then we derive an analytic expression for the corresponding persistence (Theorem~\ref{thm:persistence}), which is continuous and increasing as a function of the curvature $K$ (Corollary~\ref{cor:increasing}).}
Combining these results, we have the following.

\begin{theorem} \label{thm:invertible}
  Let $p(K)$ denote the maximum (\v Cech) persistence for three points on a surface of constant curvature $K$ with pairwise distances at most some fixed constant. Then $p(K)$ is an invertible function.
\end{theorem}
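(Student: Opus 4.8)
The plan is to assemble the three preceding results, with very little new work. First I would fix the constant $c>0$ bounding the pairwise distances and reduce the maximization defining $p(K)$ to a single explicit configuration. Among all triples of points of $D_K$ whose three pairwise distances are at most $c$, Theorem~\ref{thm:isoperimetric} shows that the \v Cech persistence of the associated $1$-cycle is maximized by a pairwise equidistant triple; combined with the fact that the persistence of an equidistant triple increases with its common side length, the maximizer is the equilateral triple with all pairwise distances equal to $c$. (Compactness of the relevant space of triples modulo isometry, together with continuity of persistence, ensures the maximum is attained, so $p$ is well defined.) Hence $p(K)$ equals the analytic expression of Theorem~\ref{thm:persistence} evaluated at side length $c$.

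The second step I would carry out is immediate: by Corollary~\ref{cor:increasing}, this expression, viewed as a function of $K$, is continuous and strictly increasing on $[-2,2]$. A continuous strictly monotone function on an interval is injective and maps that interval homeomorphically onto its image $[\,p(-2),\,p(2)\,]$; hence $p$ is invertible, and its inverse is in fact continuous.

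Since essentially all of the quantitative content is carried by the cited statements, the only step I expect to require care is the first one: Theorem~\ref{thm:isoperimetric} rules out non-equilateral triples, but one must also rule out the maximum being attained at an equilateral triple of side length strictly below $c$. That is a monotonicity-in-side-length claim, and I expect it to be the one place where the argument reaches slightly beyond the verbatim statements above; it should nonetheless follow from the explicit formula in Theorem~\ref{thm:persistence}. With that in hand, $p(K)$ is a closed-form function of $K$ (and of the fixed $c$), and its invertibility is then an immediate consequence of Corollary~\ref{cor:increasing}.
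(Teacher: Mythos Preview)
Your overall approach is exactly the paper's: its entire proof is the one sentence ``Combining Theorems~\ref{thm:isoperimetric} and~\ref{thm:persistence} and Corollary~\ref{cor:increasing}, we obtain Theorem~\ref{thm:invertible}.''

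That said, the step you rightly flagged as the delicate one---monotonicity of the equilateral persistence in the side length $a$---does \emph{not} come out as you predict. For $K<0$, write $u=\tfrac{a\sqrt{-K}}{2}$; Theorem~\ref{thm:persistence} gives $p(T_{K,a})=u^{-1}\sinh^{-1}\!\bigl(\tfrac{2}{\sqrt{3}}\sinh u\bigr)$, and a short calculus check (or simply the paper's own numbers: $p\approx 1.1406$ at $K=-1$, $a=1$, while the $a\to 0$ limit is $2/\sqrt{3}\approx 1.1547$) shows this is strictly \emph{decreasing} in $u$, hence in $a$. So for $K<0$ the supremum over equilateral triples with side at most $c$ is $2/\sqrt{3}$, approached as $a\to 0$ and never attained; the maximizer is certainly not the side-$c$ triangle. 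Your expectation that ``it should nonetheless follow from the explicit formula'' is therefore wrong---the formula gives the opposite monotonicity on the hyperbolic side.

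This is a genuine gap in the argument as sketched (and the paper does not address it either): read literally, the supremum of the \v Cech persistence over admissible triples equals $2/\sqrt{3}$ for every $K\le 0$, which is not injective. To salvage the statement one must either reinterpret $p(K)$ as the persistence of the equilateral triple with side equal to the fixed constant $c$ (which \emph{is} invertible in $K$ by Corollary~\ref{cor:increasing}), or supply a different argument; the compactness-plus-monotonicity route you outlined does not work as stated.
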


We will also give several procedures for estimating $K$ from the persistent homology of the \emph{Vietoris-Rips complex} on points sampled from $D_K$. 
Before we summarize our computational results
we describe our general framework.

\subsection{A framework for solving inverse problems: inference using average persistence landscapes}
\label{sec:inference}

Consider a compact metric space $(\X,d)$ together with a Borel probability measure $\mu$ with full support. Call $(\X,d,\mu)$ a \emph{metric measure space}.
Let $T$ be the diameter of $\X$.
Let $m \in \N$. 
Sample $X = (x_1,\ldots,x_m) \in \X$ independently according to $\mu$ and consider the pairwise distances $\{d(x_i,x_j) \ | \ 1 \leq i \leq j \leq m\}$.
From this data one may compute the persistent homology of the corresponding  Vietoris-Rips complex, which may be represented by the corresponding \emph{persistence landscape} $\lambda_X$~\cite{Bub}. 
Sampling $X$ is equivalent to sampling a point in $\X^m$ according to $\mu^{\tensor m}$~\cite{Chazal:2015b}. 
Let $\Psi_{\mu}^m$ be the measure induced by $\mu^{\tensor m}$ on $\mathcal{L}$, the convex hull of the persistence landscapes of persistence diagrams consisting of at most $m$ points $(x,y)$ with $0 \leq x < y \leq T$.
The \emph{average persistence landscape} is 
$\E_{\Psi_{\mu}^m}[\lambda_X]$, the expectation of the random variable $\lambda_X$ with respect to the probability measure $\Psi_{\mu}^m$.

We may estimate the average persistence landscape as follows.
If we sample $X = (x_1,\ldots, x_m)$ as above $n$ times and average the resulting persistence landscapes, we obtain the \emph{empirical average persistence landscape}
$\bar{\lambda}^m_n = \frac{1}{n} \sum_{i=1}^n \lambda_{X^{(i)}}$.
The empirical average persistence landscape converges to the average persistence landscape (pointwise~\cite{Bub} and uniformly~\cite{Chazal:2015c}).

Now assume that $C \subset \R^d$ is a compact subset and that we have a continuous map $\varphi$ from $C$ to metric measure spaces with the Gromov-Wasserstein metric~\cite{Memoli:2011b}. 
Fix $m \in \N$. 
By~\cite[Remark 6]{Chazal:2015b}, the map from metric measure spaces with the Gromov-Wasserstein metric to their average persistence landscapes is continuous. 
Thus, composing $\varphi$ with the average persistence landscape we have a continuous map from $C$ to
$L^2(\N \times \R)$, a Hilbert space containing the persistence landscapes and average persistence landscapes~\cite{Bub}.

Assume that for some unknown $c \in C$, we are able to sample points from the metric measure space $\phi(c)$ and compute their pairwise distances. 
In this case we can compute the empirical average persistence landscape $\bar{\lambda}_n^m(c)$. 
We now have the following inverse problem.
Given training data $\{c_i,\bar{\lambda}_n^m(c_i)\}$, can we estimate $c$ from $\bar{\lambda}_n^m(c)$?

We will demonstrate the feasibility of solving this inverse problem for the case in which $K \in [-2, 2] \subset \R$ and $\varphi(K)$ is the unit disk in the surface of constant curvature $K$ with probability measure proportional to the surface area measure.
In this case, the composition of $\varphi$ with the average persistence landscape is a parametrized path in $L^2(\N \times \R)$.
Our goal is to learn this parametrized path and to use it to estimate curvatures from empirical average persistence landscapes.

\begin{remark}
\new{
  It would be great to have an analytic derivation of the average persistence landscape for the Vietoris-Rips complex for $m$ points sampled from the unit disk in a surface of constant curvature. 
Unfortunately, not much is known in this direction.
The expected persistence diagram for the Vietoris-Rips complex for $m$ points sampled from the circle is known~\cite{bubenikKim:parametric}. 
In addition, the order of the maximally persistent degree-$k$ cycle for the Vietoris-Rips complex for $m$ points sampled from the $d$-dimensional cube
as $m \to \infty$ is known~\cite{Bobrowski:2017}.}

\new{
There is also a Vietoris-Rips complex for the unit disk in a surface of constant curvature~\cite{cdso:geometric}. This is a simplicial complex with uncountably many $k$-simplices for all $k\geq 0$. The persistent homology of the Vietoris-Rips complex for the circle has been derived analytically~\cite{Adamaszek:2017}. Note that the persistence landscape of such Vietoris-Rips complexes is not the same as the average persistence landscape for samples of $m$ points.}
\end{remark}

\subsection{Computational results}
\label{sec:computations}

We apply the framework in the previous section to estimating curvature from sampled points and pairwise distance data.

We estimate curvature in the supervised and unsupervised settings.
In the supervised setting we start with training data given by curvatures $K = \{-2,-1.96,-1.92,\ldots,1.96,2\}$
and corresponding empirical average persistence landscapes for homology in degree $0$ and homology in degree $1$, for $m=1000$ points.
In both settings, 
we sample $100$ values of $K$ iid from $[-2,2]$ and compute the corresponding empirical average persistence landscapes.
Using these empirical average persistence landscapes, we estimate the corresponding curvatures:
using both nearest neighbors and support vector regression in the supervised setting; and
using principal components analysis
in the unsupervised setting.
See Figure~\ref{fig:scatterplot-distance}, where we use the concatenations of the degree $0$ and degree $1$ persistence landscapes.
The root mean squared error in our estimates is
0.056
for nearest neighbors,
0.017
for support vector regression, and
0.128
for principal components analysis.
For more computational results, see Table~\ref{tab:rmse-distance}.
Furthermore, we estimate the fifth and ninety-fifth percentiles using quantile support vector regression. See Figure~\ref{fig:quantile}.

\begin{figure}[h]
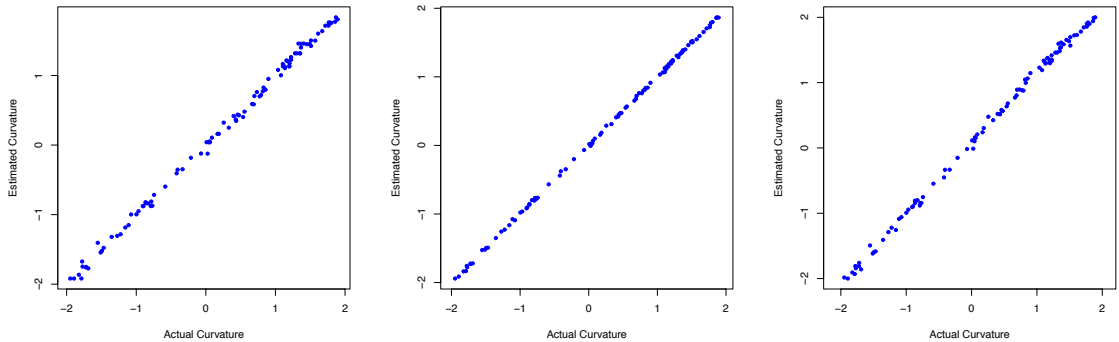

\centering
\includegraphics[width=.3\linewidth]{nn_distance_both}
\includegraphics[width=.3\linewidth]{svr_distance_both}
\includegraphics[width=.3\linewidth]{pca1_distance_both}
\caption{Plots showing actual curvature and estimated curvature using  $H_0$ and $H_1$ from distance data, for nearest neighbors (left), support vector regression (center), and the first principal component (right).}
\label{fig:scatterplot-distance}
\end{figure}

We also repeat most of the above estimates for the much more difficult computational setting in which all nonzero pairwise distances are sorted and replaced with their corresponding ordinal numbers.
This is appropriate for neuroscience data in which the distances are only known up to rescaling by an unknown monotonic function~\cite{Giusti:2015}.
In this case, the set of nonzero pairwise distances is the same for all curvatures. Nevertheless, we are still able to provide reasonable curvature estimates.
See Figure~\ref{fig:scatterplot-rank}.
The root mean squared error in our estimates is
0.262
for nearest neighbors,
0.171
for support vector regression, and
0.392
for principal components analysis.
For more computational results, see Table~\ref{tab:rmse-rank}.
This example makes it clear that the short bars in persistent homology do indeed encode subtle geometric information.

\subsection{Expected impact}
\label{sec:expected-impact}

Our theoretical work
\new{
showing that persistent homology detects curvature
may be used to help justify the use of persistent homology to study
other geometric structures in applications,}
such as 
those listed in the start of the introduction.

We have outlined a framework for using topological data analysis for solving inverse problems.
Persistent homology together with the average persistence landscape gives a continuous mapping from metric spaces with a probability measure to a Hilbert space.
In situations in which it is easy to sample or subsample points and measure pairwise distances one may compute empirical average persistence landscapes.
Convergence results are known~\cite{Chazal:2015b} and in practice, they quickly converge with little noise.
Furthermore this mapping is sensitive to the starting metric structure.
Finally, as our constructions lie a Hilbert space, one can apply tools from statistical and machine learning.
This approach should facilitate learning geometric structures in a broad range of applications.

\subsection{Related work}

Persistence landscapes have been used to study the geometry of
microstructures~\cite{Dlotko:2016};
protein conformations~\cite{giseon:maltose}; 
and financial times series~\cite{Gidea:2018}.
Average persistence landscapes and average death vectors were used to detect differences in images of leaves in~\cite{Patrangenaru:2018}.
B.\ Schweinhart recently proved that persistent homology of random samples may be used to determine the fractal dimension of certain metric spaces~\cite{Schweinhart:2018a}.

\section{Background} \label{sec:background}

In this section we provide some necessary background from persistent homology, geometry, and statistics. 
For details, we refer the reader to
\cite{Edelsbrunner:2014,edelsbrunnerHarer:book,oudot:book} for persistent homology,
\cite{Coxeter,Beardon:1995,Chavel:2006} for geometry, and 
\cite{Steinwart:2008,Smola:2004} for statistics.

\subsection{Filtered simplicial complexes from points} \label{sec:sc}

A \emph{simplicial complex} is a collection $K$ of subsets of a set $V$ of vertices, such that if $\sigma \in K$ and $\tau \subset \sigma$ then $\tau \in X$.
A \emph{filtered simplicial complex} is a collection of simplicial complexes $\{K_t\;|\; t\in \mathbb R, t\geq 0\}$ with the property that whenever $s\leq t$, there is an inclusion $K_s\subseteq K_t$.

Let $Y$ be a metric space and let $X \subset Y$ be a finite subset.
There are two common ways ways to turn $X$ into a filtered simplicial complex and we will use of both of them. First, for $t\geq 0$ let $\check{C}_t(X)$ be the simplicial complex where the 0-simplices of $\check{C}_t(X)$ are the points of $X$ and for $p \geq 1$, $\check{C}_t(X)$ contains a $p$--simplex $[x_0,..., x_p]$ if and only if
\[
  \bigcap_{i=0}^{p} B_t(x_i)\neq\emptyset,
\]
where $B_r(x) \subset Y$ denotes the closed ball of radius $r$ centered at the point $x\in X$. 
The collection $\{\check{C}_t(X) : t \geq 0\}$ forms a filtered simplicial complex, called the \emph{\v Cech complex of $X$}.

Now for $t\geq 0$, let $R_t(X)$ be the simplicial complex whose 0-simplices are the points of $X$ and which includes the $p$-simplex $[x_0,...,x_p]$ if and only if for all $1\leq i, j\leq p$, $d(x_i,x_j)\leq t$.
This filtered simplicial complex is called the \emph{Vietoris-Rips Complex}.
Notice that unlike the \v Cech complex, which depends on $Y$, the Vietoris-Rips complex depends only on $X$.

\subsection{Persistent homology} \label{sec:ph}

Let $K$ be a simplicial complex. Taking reduced simplicial homology in degree $d$ with coefficients in some fixed field yields a vector space $H_d(K)$.
Furthermore an inclusion of simplicial complexes induces a linear map between the corresponding vector spaces~\cite[Chapter 8]{Armstrong:1983}.
Let $\{K_t\}$ be a filtered simplicial complex.
Taking homology in degree $d$ with coefficients in some fixed field yields a \emph{persistence module}, $M$, given by the collection of vector spaces $\{H_d(K_t)\;|\; t\in\mathbb R, t\geq 0\}$ and linear maps $f_s^t\colon H_d(K_s) \rightarrow H_d(K_t) $ induced by the inclusions $K_s \subseteq K_t$ whenever $s \leq t$.
As a special case, one has the \emph{interval persistence modules} which are one dimensional on an interval, zero outside the interval, and all linear maps are the identity whenever not forced to be zero.
The structure theorem of persistent homology says that under mild hypotheses, every persistence module $M$ is isomorphic to a direct sum of interval modules. The collection of these intervals is called the \emph{bar code} of $M$. Replacing an interval with its ordered pair of endpoints, we instead obtain the \emph{persistence diagram} of $M$.
To enable us to use ideas from statistics and machine learning, we construct the following vector summaries.

For homology in degree $0$ of both the \v Cech complex and the Vietoris-Rips complex, all of the intervals in the bar code have left endpoint $0$.
In this case we can represent the bar code by a sorted list of the right end points in decreasing order. We call this order statistic a \emph{death vector}.
Note that since we are using reduced homology and all of our complexes are eventually connected, all of the values in the death vector are finite.

In other cases, we need a more sophisticated vector encoding.
The \textit{persistent Betti number} of M corresponding to $s\leq t$ is defined to be $\beta_s^t =$ dim(image($f^t_s$)). The \textit{persistence landscape} of $M$~\cite{Bub} is the function
\[
  \lambda:\mathbb{N} \times \mathbb{R} \rightarrow \mathbb{R} : (k,t) \mapsto \sup\{ m\geq 0 : \beta_{t-m}^{t+m} \geq k\}.
\]
We discretize this function to obtain a vector,
\[
  (\lambda(1,a),\lambda(1,a+\delta),\ldots,\lambda(\new{1},a+m\delta),\lambda(2,a),\lambda(2,a+\delta),\ldots,\lambda(N,a+m\delta)),
\]
which we also call the persistence landscape.
The persistence landscape can be efficiently computed from the bar code~\cite{bubenikDlotko}.
Note that since we are using reduced homology and all of our simplicial complexes are eventually contractible, all of the values in the persistence landscape are finite.

For homology in degree $0$, we prefer the death vector to the persistence landscape since it provides a sparser encoding of the same information.

\subsection{Geometries of constant curvature}

Let $M_K$ be the complete, simply-connected 2-dimensional Riemannian manifold of constant Gaussian curvature $K$. Note that $M_K$ is unique up to isometry by the Killing-Hopf Theorem. When $K=0$, we can identify $M_0$ with $\R^2$ with the standard Euclidean metric. When $K>0$ we can identify $M_K$ with the sphere of radius $R:=\frac1{\sqrt{K}}$ centered at the origin in $\R^3$, that is $M_K=\{(x, y, z)\in\R^3\;|\; x^2+y^2+z^2=R^2\}$. When $K<0$, we identify $M_K$ with the  Poincar\'e disk model of the hyperbolic plane of curvature $K$. That is, for $R=\frac{1}{\sqrt{-K}}$,  $M_K=\{(x, y)\in\R^2\;|\; x^2+y^2<R\}$ with Riemannian metric $$ds^2=\frac{4(dx^2+dy^2)}{(1-\frac{x^2+y^2}{R^2})^{2}}.$$ The geodesics in this model correspond to the intersection of $M_K$ and a (Euclidean) line through the origin in $\R^2$ or a (Euclidean) circle which is orthogonal to the boundary circle $\{(x, y)\in\R^2\;|\; x^2+y^2=R\}$.

We think of $M_K$ as a model for hyperbolic, Euclidean, and spherical geometry when $K<0$, $K=0$, and $K>0$ respectively. The results in Section 3 will be derived using only elementary properties of these geometries. We review some of these properties next. First, however, we note that if $S$ is a surface with a Riemannian metric of constant Gaussian curvature $K$, then we can naturally identify the universal cover $\widetilde{S}$ with $M_K$. Hence $S$ will be locally isometric \new{to} $M_K$. So while the model spaces $M_K$ that we work with are all simply-connected, we will see the same behavior locally on any surface of constant curvature. Note also that by the Uniformization Theorem, every orientable surface admits a Riemannian metric of constant Gaussian curvature.

\subsection{Triangles} \label{sec:triangles}

Let $P,Q$ be distinct points in $M_K$.
Unless $K>0$ and $P$ and $Q$ are antipodal,
there is a unique line $\overleftrightarrow{PQ}$ containing $P$ and $Q$ and a unique shortest geodesic between $P$ and $Q$ whose image $\overline{PQ}$ is a subset of $\overleftrightarrow{PQ}$.

Let $A$, $B$, and $C$ be three points in $M_K$ which are assumed to not be collinear.
If $K>0$, then this implies that no pair of these points is a pair of antipodal points on the sphere. It follows that there is a unique shortest geodesic segment between each pair of points.
Let $T = \overline{AB} \cup \overline{AC} \cup \overline{BC}$ called the \emph{triangle} with \emph{vertices} $A$, $B$, $C$, and \emph{edges} or \emph{sides} $\overline{AB}$, $\overline{AC}$, $\overline{BC}$.
The subspace $M_K \setminus T$ has two components.
If $K \leq 0$ then exactly one of these has finite area, called the \emph{interior} of $T$.
If $K > 0$ then the component with smaller area is called the \emph{interior} of $T$.

\subsection{Circumcircles} \label{sec:circumcircles}

A \emph{circumcircle} of a triangle $T$ is a circle containing the vertices of $T$. A center of this circle is called a \emph{circumcenter} and the corresponding radius is a called a \emph{circumradius}. In $M_0$, every triangle has a unique circumcircle with a unique circumcenter. If $K>0$, then each triangle in $M_K$ has a unique circumcircle with two circumcenters. If $K<0$, then a triangle in $M_K$ may or may not have a circumcircle, but if it does then the circumcenter is unique.

\begin{lemma}
Let $P$ and $Q$ be points in $M_K$. Then the perpendicular bisector of a line segment $\overline{PQ}$ consists of those points equidistant to $P$ and $Q$.
\end{lemma}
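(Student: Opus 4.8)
The plan is to prove the two inclusions separately: that every point of the perpendicular bisector is equidistant from $P$ and $Q$, using the reflection isometry across the bisector; and conversely that every point equidistant from $P$ and $Q$ lies on the bisector, using side--side--side congruence of geodesic triangles. Throughout, write $M$ for the midpoint of $\overline{PQ}$ and let $\ell$ denote the perpendicular bisector, i.e.\ the unique geodesic line through $M$ meeting $\overleftrightarrow{PQ}$ orthogonally at $M$; recall from Section~\ref{sec:triangles} that $\overline{PQ}$ and $\overleftrightarrow{PQ}$ exist and are unique, and that on the sphere ($K>0$) we are implicitly assuming $P$ and $Q$ are not antipodal, so that $M$ is well defined at distance $d(P,Q)/2 < \pi R /2$ from each of $P$ and $Q$.

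For the first inclusion, let $\sigma\colon M_K \to M_K$ be reflection across $\ell$; in each of the three geometries this is an isometry fixing $\ell$ pointwise. Since $\overleftrightarrow{PQ}$ meets $\ell$ orthogonally at the $\sigma$-fixed point $M$, its image $\sigma(\overleftrightarrow{PQ})$ is again a geodesic through $M$ orthogonal to $\ell$ at $M$, hence equals $\overleftrightarrow{PQ}$, and $\sigma$ reverses its orientation. Thus $\sigma$ fixes $M$ and exchanges the two points of $\overleftrightarrow{PQ}$ at distance $d(P,M)=d(Q,M)$ from $M$, one on each side; these points are $P$ and $Q$, so $\sigma(P)=Q$. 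Consequently, for any $X\in\ell$ we get $d(X,P)=d(\sigma X,\sigma P)=d(X,Q)$, as desired.

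For the converse, assume $d(X,P)=d(X,Q)$. First dispose of the case $X\in\overleftrightarrow{PQ}$: there $X$ is a point of the line $\overleftrightarrow{PQ}$ equidistant from $P$ and $Q$, and since $\overleftrightarrow{PQ}$ is isometric to $\R$ when $K\le 0$ and to a circle when $K>0$, the only such points are $M$ and, when $K>0$, the antipode of $M$; both lie on $\ell$, since the great circle $\ell$ passes through $M$ and hence through its antipode. Otherwise $X\notin\overleftrightarrow{PQ}$, so $X,M,P$ are not collinear, nor are $X,M,Q$, and all of the shortest geodesics $\overline{XM},\overline{XP},\overline{XQ},\overline{MP},\overline{MQ}$ are uniquely defined (when $K>0$ their lengths are all below $\pi R$: for $\overline{XM}$ because $X$ is not the antipode of $M$, and for $\overline{XP}=\overline{XQ}$ because $d(X,P)=\pi R$ would force $X=-P$ and $X=-Q$, hence $P=Q$). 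Now the geodesic triangles $XMP$ and $XMQ$ have $d(X,M)=d(X,M)$, $d(M,P)=d(M,Q)$, and $d(X,P)=d(X,Q)$, so by side--side--side congruence in $M_K$ (equivalently, by the law of cosines of the appropriate geometry) they are congruent; in particular $\angle XMP=\angle XMQ$. Since $M$ lies between $P$ and $Q$, these two angles are supplementary, so each is a right angle; hence the geodesic through $X$ and $M$ meets $\overleftrightarrow{PQ}$ orthogonally at $M$ and therefore coincides with $\ell$, giving $X\in\ell$.

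The step I expect to require the most care is the bookkeeping in the spherical case $K>0$: one must check that the perpendicular bisector there is a single great circle through both $M$ and its antipode, so that the ``extra'' equidistant point coming from antipodality genuinely lies on $\ell$, and one must confirm that every shortest geodesic invoked in the congruence argument has length strictly less than $\pi R$, so that uniqueness of geodesics and side--side--side congruence both apply. The remaining ingredients---that reflections across geodesics are isometries, and that side--side--side congruence holds in each geometry of constant curvature---are among the elementary facts the paper allows us to use.
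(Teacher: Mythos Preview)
Your proof is correct, but it takes a different route from the paper's. The paper proves both inclusions by Side--Angle--Side congruence: for ``equidistant $\Rightarrow$ on the bisector'' it drops the angle bisector of $\angle PAQ$ from the equidistant point $A$, applies SAS to the two resulting triangles to see that the foot is the midpoint and the angles there are right; for ``on the bisector $\Rightarrow$ equidistant'' it applies SAS directly to the triangles $\triangle PDA$ and $\triangle QDA$ with $D$ the midpoint. You instead use the reflection isometry across $\ell$ for the first inclusion and Side--Side--Side (via the law of cosines) for the second. Your reflection argument is conceptually cleaner and avoids any auxiliary construction, while your SSS argument bypasses the angle-bisector step entirely. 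You are also more careful than the paper about the spherical degeneracies: you explicitly treat the case $X\in\overleftrightarrow{PQ}$ (catching the antipode of $M$ when $K>0$) and verify that all geodesic segments invoked have length below $\pi R$, points the paper's synthetic argument leaves implicit. Either approach is adequate here; yours trades the uniformity of a single congruence criterion for a slightly more transparent handling of the edge cases.
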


\begin{proof}

Suppose $A$ is equidistant from $P$ and $Q$. Let $l$ be the line through $A$ which bisects the angle $\angle PAQ$, and let $D$ be the point where $l$ intersects $\overline{PQ}$. Then $\triangle PAD\cong \triangle QAD$ by Side-Angle-Side. Hence $\overline{PD}\cong\overline{DQ}$, so $D$ is the midpoint of $\overline{PQ}$. Also $\angle PDA\cong \angle QDA$, and since these angles sum to $\pi$ they must both be right angles. Hence $l$ is the perpendicular bisector of $\overline{PQ}$. 

Conversely, if $A$ lies on the perpendicular bisector $l$ of $\overline{PQ}$ and $D$ is the midpoint of $\overline{PQ}$, then triangles $\triangle PDA$ and $\triangle QDA$ are congruent by Side-Angle-Side, so $\overline{PA}\cong \overline{QA}$.
\end{proof}

\begin{theorem}
  For a triangle in $M_K$, the following statements are equivalent.
  \begin{enumerate}
  \item The perpendicular bisectors of two of the sides intersect.
  \item The triangle has a circumcircle.
  \item The perpendicular bisectors of the sides have a common intersection.
  \end{enumerate}
Moreover, when \new{at least one of} these equivalent statements holds then the intersection point of the perpendicular bisectors of the sides is the circumcenter of the triangle.
\end{theorem}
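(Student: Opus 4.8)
The plan is to deduce everything from the preceding lemma, which identifies the perpendicular bisector of a segment $\overline{PQ}$ with the set of points equidistant from $P$ and $Q$. Write $A$, $B$, $C$ for the vertices of the triangle. I would prove the cycle of implications $(b)\Rightarrow(c)\Rightarrow(a)\Rightarrow(b)$, and the ``moreover'' clause will drop out along the way.

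For $(b)\Rightarrow(c)$: a circumcircle has a center $O$ with $d(O,A)=d(O,B)=d(O,C)$ equal to the circumradius; in particular $O$ is equidistant from $A$ and $B$, so by the lemma it lies on the perpendicular bisector of $\overline{AB}$, and the same argument places $O$ on the perpendicular bisectors of $\overline{BC}$ and $\overline{CA}$, so all three bisectors pass through $O$. The implication $(c)\Rightarrow(a)$ is immediate, since a common intersection point of the three bisectors is in particular an intersection point of any two of them. For $(a)\Rightarrow(b)$: if the perpendicular bisectors of two sides, say $\overline{AB}$ and $\overline{AC}$, meet at a point $O$, then by the lemma $d(O,A)=d(O,B)$ and $d(O,A)=d(O,C)$, so $d(O,A)=d(O,B)=d(O,C)=:r$, and the circle of radius $r$ centered at $O$ contains all three vertices, hence is a circumcircle. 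In each of these arguments the relevant point $O$ is exhibited as equidistant from $A$, $B$, $C$, hence (again by the lemma) as lying on all three perpendicular bisectors and as a center of the circumcircle; this gives the ``moreover'' statement.

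Most of this is a routine translation through the lemma; the one place requiring care is the count and uniqueness of circumcenters, which depends on the sign of $K$. When $K\le 0$ I would note that the perpendicular bisectors of two sides are genuinely distinct lines --- otherwise their common line would be perpendicular to both $\overleftrightarrow{AB}$ and $\overleftrightarrow{AC}$, which share the vertex $A$, forcing $A$, $B$, $C$ to be collinear --- so two of them meet in at most one point and the circumcenter is unique, consistent with the dichotomy recorded in Section~\ref{sec:circumcircles}. When $K>0$ the perpendicular bisectors are great circles, which always meet in a pair of antipodal points, and a short check using that no two vertices are antipodal shows these are two \emph{distinct} great circles; their two intersection points are then exactly the two circumcenters of the (unique) circumcircle, so the statement holds with ``the circumcenter'' read appropriately. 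I expect this $K>0$ bookkeeping to be the only real subtlety; the core equivalences follow directly from the lemma.
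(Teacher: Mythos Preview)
Your proposal is correct and takes essentially the same approach as the paper: both argue the cycle of implications using the preceding lemma that the perpendicular bisector is the equidistant locus, and both extract the ``moreover'' clause from the same observation that the common intersection point is equidistant from all three vertices. The only difference is cosmetic (the paper orders the cycle as $(a)\Rightarrow(b)\Rightarrow(c)\Rightarrow(a)$), and your added paragraph on counting circumcenters by the sign of $K$ goes beyond what the paper proves here---that bookkeeping is recorded separately in Section~\ref{sec:circumcircles} rather than in this proof.
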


\begin{proof}
  Let $A$, $B$, $C$ be the vertices of triangle $T$.\\
  (a) implies (b). Assume that a point $P$ is in the intersection of the perpendicular bisectors of two of the sides of $T$. Then $P$ is equidistant from $A$, $B$, and $C$. So $P$ is a circumcenter of $T$.

  (b) implies (c). Let $P$ be a circumcenter. Then $P$ is equidistant from $A$, $B$, $C$. So $P$ lies on the perpendicular bisector of each side.

  (c) implies (a) is immediate.
\end{proof}

\subsection{Areas of disks}

We will use the following basic fact.
The area of a disk of radius $r$ on a surface of constant curvature $K$ is given by
\begin{equation*}
  A(r) =
  \begin{cases}
    \frac{4\pi}{-K} \sinh^2 \left( \frac{r\sqrt{-K}}{2}\right) & \text{if } K<0\\
    \pi r^2 & \text{if } K=0\\
    \frac{4\pi}{K} \sin^2 \left( \frac{r\sqrt{K}}{2}\right) & \text{if } K>0\new{.}
  \end{cases}
\end{equation*}

\subsection{Distances between points on  a unit disk}
\label{sec:distance}

We will want to compute the distances between points sampled from a disk of radius one on $M_K$. We will represent the points in this disk using polar coordinates $(r,\theta)$, where $0 \leq r \leq 1$ and $0 \leq \theta < 2\pi$.

For the Euclidean case, $K=0$, we convert to Cartesian coordinates $(r\cos\theta, r\sin\theta)$ and compute the Euclidean distance.

In the spherical case, $K>0$, $M_K$ is realized as the sphere of radius $R$ centered at the origin in $\R^3$, where $R = \frac{1}{\sqrt{K}}$. We consider our disk to be a spherical cap of this sphere. The point on the disk corresponding to $(r, \theta)$ can be written in spherical coordinates as $(R,\theta,\frac{r}{R})$.
Converting to Cartesian coordinates, we have
$(R\sin(\frac{r}{R})\cos\theta,R\sin(\frac{r}{R})\sin\theta,R\cos(\frac{r}{R}))$.
The distance between two such points $x$ and $y$ is given by
$R \cos^{-1}(\frac{x \cdot y}{R^2})$.
However, $\cos^{-1}(t)$ is not numerically stable near zero, so instead we use the following robust formula,
$R \tan^{-1}(\frac{|x \times y|}{x \cdot y})$. More specifically, we will use the two-argument arctangent function $R \atan\!2(|x \times y|,x \cdot y)$.

For the hyperbolic case, $K<0$, $M_K$ is realized as the Poincar\'e disk, with $R = \frac{1}{\sqrt{-K}}$.
We consider our disk of hyperbolic radius one to be centered at the origin. The point on the disk corresponding to $(r, \theta)$ can be written in Cartesian coordinates as
$\left(R\tanh\left(\frac{r}{2R}\right)\cos\theta,R\tanh\left(\frac{r}{2R}\right)\sin\theta\right)$.
The hyperbolic distance between between to points $u$ and $v$ in the Poincar\'e $R$-disk is given by $2R \tanh^{-1} \frac{|z-w|}{|1-z\bar{w}|}$ where $z = u/R$ and $w = v/R$ are thought of as complex numbers.

\subsection{Laws of sines and cosines}

We will need the laws of sines and cosines for a triangle on a surface of constant curvature $K$.

\begin{theorem}{Generalized Law of Sines}
  \label{thm:sine-law}

Let $\Delta ABC$ be a triangle in $M_K$ with lengths $a,b,c$ and angles $\alpha,\beta,\gamma$ respectively. 
When K=0, 
$$\dfrac{sin(\alpha)}{a} = \dfrac{sin(\beta)}{b} = \dfrac{sin(\gamma)}{c}.$$
When $K>0$,
$$\dfrac{\sqrt{K}sin(\alpha)}{sin(a\sqrt{K})} = \dfrac{\sqrt{K}sin(\beta)}{sin(b\sqrt{K})} = \dfrac{\sqrt{K}sin(\gamma)}{sin(c\sqrt{K})}.$$
When $K<0$,
$$\dfrac{\sqrt{-K}sin(\alpha)}{sinh(a\sqrt{-K})} = \dfrac{\sqrt{-K}sin(\beta)}{sinh(b\sqrt{-K})} = \dfrac{\sqrt{-K}sin(\gamma)}{sinh(c\sqrt{-K})}.$$
\end{theorem}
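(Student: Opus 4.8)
The plan is to reduce the two curved cases to the classical spherical and hyperbolic laws of sines by a rescaling argument, the Euclidean case being standard, and then to also record a self-contained ``drop an altitude'' derivation that handles all three curvatures uniformly. For the rescaling reduction: when $K>0$ the model $M_K$ is the round sphere of radius $R=1/\sqrt K$ in $\R^3$, so a geodesic triangle $\triangle ABC\subset M_K$ with side lengths $a,b,c$ becomes, after dividing all lengths by $R$, a spherical triangle on the unit sphere with angular side lengths $a\sqrt K$, $b\sqrt K$, $c\sqrt K$ and the same angles $\alpha,\beta,\gamma$, since rescaling the metric does not change angles. The classical spherical law of sines then reads $\sin\alpha/\sin(a\sqrt K)=\sin\beta/\sin(b\sqrt K)=\sin\gamma/\sin(c\sqrt K)$, and multiplying through by $\sqrt K$ gives the stated identity. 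The case $K<0$ is the same with $R=1/\sqrt{-K}$, using the curvature $-1$ hyperbolic law of sines $\sin\alpha/\sinh(a\sqrt{-K})=\cdots$ and multiplying by $\sqrt{-K}$. Both classical identities are recorded in the geometry references cited in Section~\ref{sec:background}, so this already proves the theorem.

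For a self-contained proof I would drop an altitude, the one ingredient being the right-triangle relation: if $\triangle ABC$ in $M_K$ has its right angle at $C$, hypotenuse $c$, and leg $a$ opposite the angle $\alpha$ at $A$, then $\sin(a\sqrt K)=\sin(c\sqrt K)\sin\alpha$ for $K>0$, $\sinh(a\sqrt{-K})=\sinh(c\sqrt{-K})\sin\alpha$ for $K<0$, and $a=c\sin\alpha$ for $K=0$; this follows from the companion (side--side--side) law of cosines on $M_K$ by specializing one angle to $\pi/2$ and simplifying, or again by rescaling the classical right-triangle formulas. Given a general triangle $\triangle ABC$ with angles $\alpha,\beta,\gamma$ opposite $a,b,c$, drop the geodesic perpendicular from $C$ to the line $\overleftrightarrow{AB}$, let $F$ be its foot, and let $h$ be the length of $\overline{CF}$. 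Assuming for the moment that $F$ lies between $A$ and $B$, this splits the figure into right triangles $\triangle AFC$ and $\triangle BFC$ with right angle at $F$, hypotenuses $b$ and $a$, and angles $\alpha$ and $\beta$ at $A$ and $B$; applying the right-triangle relation to each gives, when $K>0$, $\sin(h\sqrt K)=\sin(b\sqrt K)\sin\alpha=\sin(a\sqrt K)\sin\beta$, hence $\sin\alpha/\sin(a\sqrt K)=\sin\beta/\sin(b\sqrt K)$. Dropping instead the altitude from $A$ onto $\overleftrightarrow{BC}$ yields $\sin\beta/\sin(b\sqrt K)=\sin\gamma/\sin(c\sqrt K)$, so all three ratios agree; multiplying by $\sqrt K$ proves the theorem, and the cases $K<0$ and $K=0$ are identical with $\sin(x\sqrt K)$ replaced by $\sinh(x\sqrt{-K})$ and by $x$ respectively.

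The main obstacle is bookkeeping for degenerate and boundary configurations rather than anything substantive. When the foot $F$ falls outside the segment $\overline{AB}$, i.e.\ $\alpha$ or $\beta$ is obtuse, one sub-triangle sees an exterior angle of $\triangle ABC$ at that vertex, but $\sin(\pi-\theta)=\sin\theta$ leaves the identities intact; this, together with the case in which $\triangle ABC$ already contains a right angle (drop the altitude from a different vertex), and, for $K>0$, the need to choose a vertex that is not a pole of the opposite geodesic so that the perpendicular foot is well defined, must all be treated explicitly. It helps to note that for triangles as defined in Section~\ref{sec:triangles} every side has length strictly less than $\pi/\sqrt K$ when $K>0$, so every argument of $\sin(\cdot\,\sqrt K)$ lies in $(0,\pi)$ and no sine vanishes spuriously. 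As a consistency check, letting $K\to 0$ in either curved identity recovers the Euclidean law of sines $\sin\alpha/a=\sin\beta/b=\sin\gamma/c$.
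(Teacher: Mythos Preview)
The paper does not prove this theorem at all: it is stated in Section~\ref{sec:background} as background material, with the standing references to Coxeter, Beardon, and Chavel serving in lieu of a proof. Your proposal therefore goes well beyond what the paper does. Both of your arguments are correct and standard: the rescaling reduction to the unit-curvature models is the quickest route, and the altitude-dropping derivation is a clean self-contained alternative. Your handling of the obtuse-angle case via $\sin(\pi-\theta)=\sin\theta$ and the side-length restriction $a<\pi/\sqrt K$ for $K>0$ is appropriate. If anything, the first paragraph alone already suffices given the paper's own stance that this is a textbook fact.
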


 \begin{theorem}{Generalized Law of Cosines}\label{thm:cosines-law}

Let $\Delta ABC$ be a triangle in $M_K$ with lengths $a,b,c$ and angles $\alpha,\beta,\gamma$ respectively. 
When $K=0$, 
\[
c^2=a^2+b^2+ab\cos(\gamma)
\]
When $K>0$, 
\[
\cos(c\sqrt{K})=\cos(a\sqrt{K})\cos(b\sqrt{K})+\sin(a\sqrt{K})\sin(b\sqrt{K})\cos(\gamma)
\]
When $K<0$,
\[
\cosh(c\sqrt{-K})=\cosh(a\sqrt{-K})\cosh(b\sqrt{-K})-\sinh(a\sqrt{-K})\sinh(b\sqrt{-K})\cos(\gamma).
\] 
\end{theorem}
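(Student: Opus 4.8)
The plan is to reduce everything to the three normalized curvatures $K\in\{1,0,-1\}$ by a rescaling argument, and then to prove each case by placing the triangle in a concrete model and computing a single (Euclidean or Lorentzian) inner product. For the rescaling: if $(S,g)$ is a surface of constant Gaussian curvature $\kappa$ and $\lambda>0$, then $(S,\lambda^2 g)$ has constant curvature $\kappa/\lambda^2$, and under this conformal rescaling lengths are multiplied by $\lambda$ while angles are unchanged. Hence a triangle in $M_K$ with $K>0$, sides $a,b,c$ and angles $\alpha,\beta,\gamma$ corresponds, via the metric rescaling by $\lambda=1/\sqrt K$, to a triangle in $M_1$ with sides $a\sqrt K,\ b\sqrt K,\ c\sqrt K$ and the \emph{same} angles; the $K<0$ case reduces to $M_{-1}$ with sides $a\sqrt{-K}$, etc.; and $K=0$ is already normalized. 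So it suffices to establish the identity on $M_1$, $M_{-1}$, and $M_0$, after which the stated formulas follow by substitution. The Euclidean case $K=0$ is immediate: place $C$ at the origin of $\R^2$, $B=(a,0)$, $A=(b\cos\gamma,b\sin\gamma)$, and expand $c^2=|A-B|^2$. (It can also be recovered as the $K\to0$ limit of either curved formula by Taylor-expanding in $\sqrt{|K|}$ and collecting leading-order terms.)

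For the spherical case $K=1$, realize $M_1$ as the unit sphere in $\R^3$ and write the vertices as unit vectors $A,B,C$. Geodesic distance equals angular distance, so $\cos a=B\cdot C$, $\cos b=C\cdot A$, $\cos c=A\cdot B$. The interior angle $\gamma$ at $C$ is the angle between the tangent directions at $C$ toward $A$ and toward $B$, i.e.\ between $A-(A\cdot C)C$ and $B-(B\cdot C)C$. Computing $\cos\gamma$ as the normalized dot product of these two vectors, using $C\cdot C=1$ and $|A-(A\cdot C)C|^2=1-\cos^2 b=\sin^2 b$, yields
\[
  \cos\gamma=\frac{\cos c-\cos a\cos b}{\sin a\sin b},
\]
which rearranges to $\cos c=\cos a\cos b+\sin a\sin b\cos\gamma$; rescaling gives the stated $K>0$ formula.

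For the hyperbolic case $K=-1$, pass to the hyperboloid model $\{x\in\R^{2,1}:\langle x,x\rangle=-1,\ x_3>0\}$ with $\langle x,y\rangle=x_1y_1+x_2y_2-x_3y_3$ (isometric to the Poincar\'e disk of Section~\ref{sec:distance}), on which geodesic distance satisfies $\cosh d(x,y)=-\langle x,y\rangle$ and the induced metric on tangent spaces is positive definite. Running the spherical computation verbatim — replacing the Euclidean dot product by $\langle\,\cdot\,,\,\cdot\,\rangle$, using $\langle C,C\rangle=-1$, and noting $\langle A+\langle A,C\rangle C,\ A+\langle A,C\rangle C\rangle=\cosh^2 b-1=\sinh^2 b$ — gives $\cos\gamma=(\cosh a\cosh b-\cosh c)/(\sinh a\sinh b)$, that is, $\cosh c=\cosh a\cosh b-\sinh a\sinh b\cos\gamma$; the sign change relative to the spherical case is precisely the signature $\langle C,C\rangle=-1$. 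Rescaling by $\lambda=1/\sqrt{-K}$ produces the stated $K<0$ identity. (Alternatively one can compute directly in the Poincar\'e disk, putting $C$ at the origin, where the metric is conformal so $\gamma$ is the Euclidean angle and $A,B$ sit at Euclidean radii $\tanh(a/2)$ and $\tanh(b/2)$, and then applying the distance formula $2\tanh^{-1}\frac{|z-w|}{|1-z\bar w|}$ and simplifying.)

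The only real work is the trigonometric/hyperbolic simplification in the last two cases: expanding the normalized inner product of the two projected tangent vectors and collapsing it to the clean three-term form requires careful bookkeeping with $\sin^2+\cos^2=1$ (respectively $\cosh^2-\sinh^2=1$) and with the sign conventions for the interior angle. The scaling reduction and the Euclidean computation are routine, so the curved trigonometric identities are where essentially all the effort lies.
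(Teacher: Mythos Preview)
The paper does not actually prove this theorem: it is stated in Section~2 as a standard background result (alongside the Generalized Law of Sines), with the reader implicitly referred to the cited geometry texts. There is therefore no ``paper's proof'' to compare against.

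Your argument is the standard one and is correct in outline and in the details you give: normalize the curvature by rescaling the metric (lengths scale by $\lambda$, angles are conformal invariants), and then compute in an explicit model --- the unit sphere in $\R^3$ with the Euclidean dot product for $K=1$, the hyperboloid in $\R^{2,1}$ with the Lorentzian form for $K=-1$, and $\R^2$ for $K=0$. The projected-tangent-vector computations you sketch indeed collapse to $\cos\gamma=(\cos c-\cos a\cos b)/(\sin a\sin b)$ and $\cos\gamma=(\cosh a\cosh b-\cosh c)/(\sinh a\sinh b)$ exactly as you claim.

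One point you should flag explicitly: your Euclidean computation with $C=0$, $B=(a,0)$, $A=(b\cos\gamma,b\sin\gamma)$ yields $c^2=a^2+b^2-2ab\cos\gamma$, the correct law of cosines. The displayed formula $c^2=a^2+b^2+ab\cos\gamma$ in the statement is a typo in the paper (wrong sign and missing factor of $2$); your proof establishes the correct identity, not the one printed.
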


\subsection{Inversion sampling} \label{sec:inversion-sampling}
  
The following theorem allows us to sample points from a distribution knowing only the inverse of the cumulative distribution $\new{F}$ by sampling a point $u$ uniformly from $[0,1]$ and then calculating $F^{-1}(u)$. This method of sampling from F is called \emph{inversion sampling}. 

\begin{theorem}\cite{Devroye} \label{thm:inversion-sampling}
  Let F be an invertible continuous cumulative distribution function on some domain D. If U is distributed uniformly on $[0,1]$, then $F^{-1}(U)$ has cumulative distribution function F. 
\end{theorem}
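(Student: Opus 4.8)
The plan is to compute directly the cumulative distribution function of the random variable $X := F^{-1}(U)$ and verify that it coincides with $F$.

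First I would record the structural consequences of the hypotheses. Any cumulative distribution function is non-decreasing, so an \emph{invertible} one must be strictly increasing on its domain; together with continuity this makes $F$ a strictly increasing continuous bijection from $D$ onto an interval $J \subseteq [0,1]$, with inverse $F^{-1}\colon J \to D$ that is itself strictly increasing and continuous. The key elementary fact is then the equivalence of events
\[
  F^{-1}(u) \leq x \iff u \leq F(x) \qquad (x \in D,\ u \in J),
\]
obtained by applying the increasing bijection $F$ to the left inequality and $F^{-1}$ to the right one.

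Next I would use this equivalence to evaluate, for $x \in D$,
\[
  P\bigl(F^{-1}(U) \leq x\bigr) = P\bigl(U \leq F(x)\bigr) = F(x),
\]
where the last step holds because $U$ is uniformly distributed on $[0,1]$ and $F(x) \in [0,1]$. Thus $X$ has cumulative distribution function $F$. When $D$ is a proper subinterval of $\R$ one first extends $F$ to all of $\R$ in the standard way (setting it to $0$ to the left of $D$ and $1$ to the right of $D$); the identical computation then covers $x \notin D$, where $F(x) \in \{0,1\}$ and the probability is trivially $0$ or $1$.

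There is no genuine obstacle here: the argument is a short monotonicity computation. The only points needing a word of care are the justification of the displayed equivalence of events (using strict monotonicity and continuity of $F$, so that $F$ restricts to a genuine bijection $D \to J$) and the remark that values $u \notin J$ occur with probability $0$, since $J$ can differ from an interval of the same length by at most its two endpoints.
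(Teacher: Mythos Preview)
Your argument is correct and is the standard proof of the inversion-sampling theorem. Note that the paper does not actually prove this statement: it is simply quoted with a citation to Devroye, so there is no ``paper's own proof'' to compare against; your direct computation of $P(F^{-1}(U)\le x)=P(U\le F(x))=F(x)$ is exactly what one finds in the cited reference.
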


\subsection{Support vector regression} \label{sec:svr}

We assume that our data $(x_1,y_1),\ldots,(x_N,y_N)$ with $x_i \in \R^d$ and $y_i \in \R$ is drawn from some unknown joint distribution on $\R^d \times \R$.
Our goal is to estimate a functional relationship between the variables.

\emph{(Linear) support vector regression} (SVR) is an approach to this problem which computes a predictor $f(x) = \langle w, x \rangle + b$ by solving the following convex optimization problem:
\begin{align*}
\text{minimize}\quad &
\dfrac{1}{2} \|w\|^2 + C\sum\limits_{i=1}^N (\zeta_{1,i}+\zeta_{2,i})\\
\text{subject to}\quad &
  \begin{cases}
     y_i - (\langle w,x_i \rangle +b) \leq \new{\varepsilon} + \zeta_{1,i}\\
     (\langle w,x_i \rangle +b) - y_i \leq \new{\varepsilon} + \zeta_{2,i}\\
     \zeta_{1,i},\zeta_{2,i} \geq 0
   \end{cases}
\end{align*}
for each $i\in \{1,...,N\}$.
The slack variables $\zeta_{1,i}$ and $\zeta_{2,i}$ and cost parameter $C$ allow for some errors among the training data.
A larger value of $C$ increases the penalty for an error in the training data.
If $\new{\varepsilon}=0$ then this problem corresponds to using the \emph{linear loss function} $L = |y_i-f(x_i)|$. 
For $\varepsilon>0$, we instead have the \emph{$\varepsilon$-insensitive loss function} given by
\[L_{\varepsilon-\ins} = \begin{cases} 
      0 & \text{if } |y_i-f(x_i)| \leq \varepsilon \\
      |y_i-f(x_i)| - \varepsilon & \text{otherwise.} \\ 
    \end{cases}
    \]
This function ignores errors within $\varepsilon$ of the true values.

In Section~\ref{sec:svr-comp}, in which the data seems to have little noise, we are able to set $\varepsilon=0$ and $C=100$.
In Section~\ref{sec:rank}, in which the data seems to be noisier,
we take $\varepsilon = 1$ or $0.2$ and $C=10$
to avoid overfitting.

For \emph{quantile regression} (Section~\ref{sec:quantile}) we will use the \emph{pinball loss function},
\begin{equation*}
  L_{\tau-\pin} =
  \begin{cases}
    (\tau-1)(y_i - f(x_i)) & \text{if } y_i < f(x_i)\\
    \tau (y_i - f(x_i)) & \text{if } y_i \geq f(x_i),
  \end{cases}
\end{equation*}
where $0 < \tau < 1$.
This loss function allows us to estimate the $\tau$-quantile. 

\section{Persistence of triangles}

In this section we study how triangles contribute to the persistent homology of the \v{C}ech complex formed from points on $M_K$. Specifically, we show the maximal interval of parameter values for which three points contribute a non-trivial element to the homology in degree one of the \v{C}ech complex depends on $K$. Moreover, we will show that for all $K$ this interval is maximized by the vertices of an equilateral triangle $T$ and give formulas depending on $K$ and the length of the sides of $T$.

\subsection{Triangles and their persistent homology}

Let $X$ be a finite set of points on $M_K$.
Let $A$, $B$ and $C$ be points in $X$ which we assume are not collinear. When $K>0$, we will also assume that no pair of these points is antipodal, or equivalently the pairwise distances are all less then $\frac{\pi}{\sqrt{K}}$.
There are two triangles of interest corresponding to the vertices $A$, $B$, and $C$.
There is the (geometric) triangle $T$, which is a subset of $M_K$ (Section~\ref{sec:triangles}).
There is also the abstract triangle $\{A,B,C\} \subset X$ which may be an element of the \v Cech complex on $X$.
It will be convenient to refer to both of these as the triangle $T$ corresponding to the vertices $A$, $B$, and $C$.
It should be clear from the context which of these we mean.


The boundary of the triangle $T$ contributes 
a 1--cycle in $\check{C}_t(X)$ in the \v{C}ech  complex whenever $t\geq b(T)$, where 
\[
b(T)=\min\{r\;|\; B_r(X)\cap B_r(Y)\neq\emptyset\;\forall X, Y\in \{A, B, C\}\}.
\] 
The value $b(T)$ is called the \emph{birth} of the triangle $T$.
In the other direction, $T$ contributes a 2--simplex to $\check{C}_t(X)$ whenever $t\geq d(T)$, where
\[
d(T)=\min\{r\;|\; B_r(A)\cap B_r(B)\cap B_r(C)\neq\emptyset\}.
\]
The value of $d(T)$ is called the \emph{death} of $T$. Hence $T$ induces an element of $H_1(\check{C}_t(X))$ for all $t\geq b(T)$, and this element is trivial for all $t\geq d(T)$. In particular, if $b(T)=d(T)$, then $T$ does not contribute any non-trivial elements to the persistent homology.

The persistence of an interval $[b(T),d(T))$ is usually given by
the difference $d(T)-b(T)$, the length of the time that $T$ is 
contributing to homology.
However, in situations where a scale-free version is desired~\cite{Bobrowski:2017}, it is preferable to use logarithmic coordinates and to instead consider the ratio $\frac{d(T)}{b(T)}$, which we will refer to as the \emph{persistence of $T$} and denote by $p(T)$.
  

We now we fix notation that will be used for the rest of this section. Let $T$  denote a triangle with vertices $A$, $B$, and $C$. Let $a$, $b$, and $c$ be the lengths of the sides of $T$ opposite $A$, $B$, and $C$, respectively.
We assume $T$ is labeled such that $a\geq b\geq c$. See Figure~\ref{fig:basic-triangle}. When $K>0$, we let $R=\frac{1}{\sqrt{K}}$, that is $R$ is the radius of the sphere realizing $M_K$. Recall that in this case we are also assuming that $a<\pi R$. Note that the birth of $T$ is simply half the length of the longest side, so with this notation we have $b(T)=\frac{a}{2}$. We will also let $M$ denote the midpoint of the side $\overline{BC}$, and let $m$ denote the distance from $A$ to $M$. If $T$ has a circumcircle then we denote the corresponding circumcenter by $P$.

\begin{figure}[h] \centering
  \begin{tikzpicture}[scale=1.5]
    \draw [color=black] (0,0) -- (4,0) -- (1.98,2.34) -- (0,0);
    \draw[dashed] (1.98,2.34)-- (2,0); 
    \draw[fill] (0,0) circle [radius=0.05];
    \draw[fill] (4,0) circle [radius=0.05];
    \draw[fill] (1.98,2.34) circle [radius=0.05];
    \draw[fill] (2,0) circle [radius=0.05];
    \node [left] at (0,0) {$B$};
    \node [right] at (4,0) {$C$};
    \node [right] at (1.98,2.34) {$A$};
    \node [below] at (2,0) {$M$};
    \node [below] at (1,0) {$\frac{a}{2}$};
    \node [below] at (3,0) {$\frac{a}{2}$};
    \node [above right] at (3,1) {$b$};
    \node [above left] at (1,1) {$c$};
  \end{tikzpicture}
  \caption{$\triangle ABC$.} \label{fig:basic-triangle}
\end{figure}
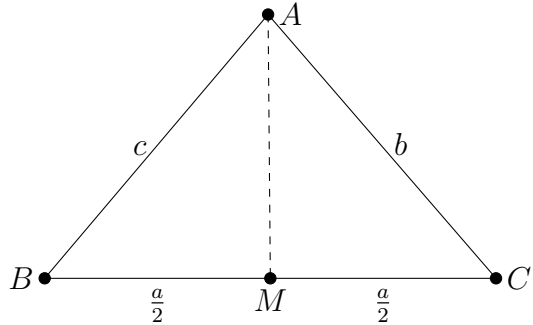

In this section we will prove the following.

\begin{proposition}\label{prop:circumcenter}
  The following are equivalent:
  \begin{enumerate}
  \item $T$ produces persistent $H_1$ in the \v Cech complex. That is, $b(T) < d(T)$.
  \item $\frac{a}{2} < m$.
  \item $T$ has a circumcircle and the circumcenter $P$ is in the interior of $T$.
  \end{enumerate}
  Furthermore, if these equivalent conditions hold, then $b(T)$ equals $\frac{a}{2}$ and $d(T)$ equals the circumradius.
\end{proposition}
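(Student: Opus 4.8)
The plan is to prove the three-way equivalence by a cycle of implications, and then nail down the birth/death formulas at the end. The cheapest fact to establish first is that $b(T) = \frac{a}{2}$ unconditionally: the balls $B_r(X)$ and $B_r(Y)$ first meet exactly when $2r$ equals the distance $d(X,Y)$, so the pairwise intersection condition for all three pairs is governed by the largest pairwise distance, which is $a$ by our labeling. Hence $b(T) = \frac{a}{2}$ regardless of which case we are in, and the remaining content is about $d(T)$ and the equivalences.

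Next I would handle (a) $\iff$ (b). Since $b(T) = \frac{a}{2}$ always, statement (a) says $\frac{a}{2} < d(T)$. So I need to show $d(T) < m$ is impossible and, more precisely, that $d(T) \geq \frac{a}{2}$ always with equality iff $m \leq \frac{a}{2}$. The key geometric observation is that $d(T)$ is the radius of the smallest ball meeting all three vertices, equivalently $d(T) = \min_{Z \in M_K} \max\{d(Z,A), d(Z,B), d(Z,C)\}$ (the radius of the smallest enclosing ball). When the midpoint $M$ of the longest side $\overline{BC}$ is the "bottleneck" — i.e.\ when $m \leq \frac{a}{2}$ — the point $M$ itself is an enclosing center of radius $\frac{a}{2}$ (it is at distance $\frac{a}{2}$ from $B$ and $C$, and at distance $m \leq \frac{a}{2}$ from $A$), so $d(T) \leq \frac{a}{2}$; combined with $d(T) \geq b(T) = \frac{a}{2}$ this forces $d(T) = \frac{a}{2}$ and (a) fails. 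Conversely, when $\frac{a}{2} < m$, I must show $M$ is not optimal and that no center achieves radius $\frac{a}{2}$: any center $Z$ with $d(Z,B), d(Z,C) \leq \frac{a}{2}$ lies on $\overline{BC}$ at its midpoint $M$ (since the two balls of radius $\frac{a}{2}$ around $B$ and $C$ intersect only at $M$ when $2 \cdot \frac{a}{2} = a = d(B,C)$), and then $d(Z,A) = m > \frac{a}{2}$, so $d(T) > \frac{a}{2}$ and (a) holds. This is where I expect to be careful about the spherical case, using the assumption $a < \pi R$ to ensure these balls behave as expected and the midpoint is unique.

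For (b) $\iff$ (c) and the circumradius formula, I would argue as follows. Assume (b), i.e.\ $\frac{a}{2} < m$. The smallest enclosing ball has some center $Z^*$ with radius $\rho = d(T) > \frac{a}{2}$. Standard smallest-enclosing-ball reasoning says the optimal center is equidistant from (at least) two of the vertices that are at maximal distance, and in fact here I claim it is equidistant from all three: if it were equidistant only from two, say from $B$ and $C$ with $d(Z^*,A) < \rho$, one could move $Z^*$ along the perpendicular bisector of $\overline{BC}$ toward $A$ to decrease the max, contradicting optimality — unless that motion is blocked, which (given $\frac{a}{2}<m$, so $Z^* \neq M$) it is not. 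So $Z^*$ is equidistant from $A$, $B$, $C$, hence by the Lemma on perpendicular bisectors it lies on all three perpendicular bisectors, so by the Theorem (the (a)$\Rightarrow$(b),(c) part) the triangle has a circumcircle with circumcenter $Z^* = P$ and circumradius $\rho = d(T)$. I then need $P$ to be in the interior of $T$: if $P$ were outside or on the boundary, the farthest vertex from $P$ would be at distance strictly exceeding $\frac{a}{2}$ in a way I can improve upon, or more directly, $P$ interior is equivalent to all three "base angles" being acute, which I can tie back to $\frac{a}{2} < m$ via the foot $M$ lying strictly between $B$ and $C$. For the converse (c) $\Rightarrow$ (b): if $P$ is the circumcenter in the interior of $T$, then the foot of the perpendicular from $P$ to line $\overleftrightarrow{BC}$ is $M$ (perpendicular bisector meets $\overline{BC}$ at its midpoint) and lies strictly between $B$ and $C$; dropping from $A$ we get that $m$, the distance from $A$ to $M$, exceeds $\frac{a}{2}$ because $A$ and the arc it sits on bulge away from $\overline{BC}$ past the $B,C$-circle of radius $\frac{a}{2}$ — here I would use the law of cosines (Theorem~\ref{thm:cosines-law}) on $\triangle ABM$ or compare $m$ with $\frac{a}{2}$ via the circumradius.

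The main obstacle I anticipate is the smallest-enclosing-ball optimality argument in the three geometries simultaneously: I need that the function $Z \mapsto \max\{d(Z,A),d(Z,B),d(Z,C)\}$ on $M_K$ attains its minimum, is "locally improvable" away from the circumcenter, and that the sublevel-set geometry of balls behaves well — all of which is standard in $\R^2$ but requires the curvature-$K$ law of cosines and, in the spherical case, the hypothesis $a < \pi R$ to rule out antipodal pathologies and guarantee convexity of small balls. I would localize this to a clean sub-lemma: "for three non-collinear points in $M_K$ with pairwise distances $< \pi R$ when $K>0$, $d(T) = \frac{a}{2}$ if $m \le \frac a2$ and otherwise $d(T)$ equals the circumradius," and prove it by the bisector/monotonicity argument above. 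Everything else — the unconditional $b(T)=\frac a2$, and translating "circumcenter interior" into "$M$ strictly between $B$ and $C$" — is elementary given the earlier Lemma and Theorem.
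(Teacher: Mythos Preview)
Your argument for (a)$\iff$(b) is correct and matches the paper's Lemma~\ref{lem:persistent-triangle} exactly (the only point of $B_{a/2}(B)\cap B_{a/2}(C)$ is $M$). Your smallest-enclosing-ball perturbation argument---if the center is not equidistant from all three vertices, slide along a perpendicular bisector to shrink the radius---is essentially the paper's Lemma~\ref{lem:d=cr}, so the identification of $d(T)$ with the circumradius is on solid ground too.

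The gaps are in the two remaining implications, and they are larger than your proposal suggests. For (b)$\Rightarrow$(c) you still need the circumcenter $P$ to lie in the interior of $T$. Your sentence ``if $P$ were outside or on the boundary \ldots\ I can improve upon it'' does not work as written: $P$ is already the minimizer of the max-distance, so to derive a contradiction you must name a specific direction in which \emph{all three} distances drop, and you have not done so. (Such a direction exists---along the bisector of $\overline{BC}$ toward $M$---but justifying it requires knowing where the foot of the perpendicular from $A$ to that bisector sits, which is precisely the kind of lemma you have not supplied.) The paper instead gives a direct construction (Lemma~\ref{interiorCC}): walk along the perpendicular bisector $l$ of $\overline{BC}$ from $M$ to the point where $l$ meets $\overline{AC}$; the sign of $d(\cdot,A)-d(\cdot,C)$ flips along the way, so by the Intermediate Value Theorem the circumcenter lies on that segment and hence inside $T$. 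For (c)$\Rightarrow$(b), your sketch (``$A$ and the arc it sits on bulge away from $\overline{BC}$'') is a picture, not an argument, and the claim that interiority translates to ``$M$ strictly between $B$ and $C$'' is vacuous since $M$ is the midpoint. The paper proves the chain $\tfrac{a}{2}<d(A,P)<m$: the first inequality is monotonicity along $l$ away from $M$ (Lemma~\ref{lem:distincreasing}); the second requires locating the foot $Q$ of the perpendicular from $A$ to $l$, showing that $Q$ is \emph{not} interior to $T$ (a nontrivial step, especially on the sphere), whence $P\in\overline{QM}$ and $d(A,P)<d(A,M)=m$. Your plan contains neither this step nor a substitute for it.
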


\begin{lemma}\label{lem:distincreasing}
  Let $P$ and $Q$ be points in $M_K$ and let $l$ be a line through $Q$ which is perpendicular to $\overleftrightarrow{PQ}$.
  If $K>0$, then we also assume that $d(P, Q)<\frac{\pi}{2}R$.
  Let $t\geq 0$.
  Let $H$ be a half plane bounded by $\overleftrightarrow{PQ}$ and let $Q_t$ be the point in $H$ on $l$ such that $d(Q, Q_t)=t$. Then $d(P, Q_t)$ is a strictly increasing function of $t$.
\end{lemma}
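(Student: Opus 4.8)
The plan is to prove Lemma~\ref{lem:distincreasing} by reducing it to the law of cosines for the right triangle $\triangle PQQ_t$. Fix $0 \le s < t$ and consider the two triangles $\triangle PQQ_s$ and $\triangle PQQ_t$. Each has a right angle at $Q$ (since $l \perp \overleftrightarrow{PQ}$), a common side $\overline{PQ}$ of fixed length $p := d(P,Q)$, and the side along $l$ has length $s$ or $t$ respectively. So I would simply write $f(t) := d(P,Q_t)$ in closed form using the generalized law of cosines (Theorem~\ref{thm:cosines-law}) with $\gamma = \pi/2$, so that the $\cos\gamma$ term vanishes, and check that $f$ is strictly increasing in $t$ in each of the three curvature regimes.

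Concretely: when $K=0$ this is $f(t)^2 = p^2 + t^2$, obviously strictly increasing. When $K<0$, writing $k = \sqrt{-K}$, the law of cosines gives $\cosh(k f(t)) = \cosh(kp)\cosh(kt)$; since $\cosh$ is strictly increasing on $[0,\infty)$ and $\cosh(kt)$ is strictly increasing in $t \ge 0$ (and $\cosh(kp) \ge 1 > 0$), the right-hand side is strictly increasing, hence so is $f(t)$. When $K>0$, writing $k=\sqrt{K}$ and $R = 1/k$, the law of cosines gives $\cos(k f(t)) = \cos(kp)\cos(kt)$. Here the hypothesis $d(P,Q) < \tfrac{\pi}{2}R$, i.e. $kp < \pi/2$, ensures $\cos(kp) > 0$, so $\cos(k f(t)) = \cos(kp)\cos(kt)$ is strictly decreasing in $t$ on the range where $kt \in [0,\pi)$; one also checks $k f(t) \in [0,\pi)$ there (equivalently $f(t) < \pi R$, which holds since $\cos(kf(t)) > -1$), and $\cos$ is injective and decreasing on $[0,\pi]$, so $f(t)$ is strictly increasing.

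The one genuine subtlety is the spherical case: I must make sure $f(t)$ stays in the range $[0,\pi R)$ where $\cos$ is injective, so that "$\cos(kf(t))$ strictly decreasing" really does imply "$f(t)$ strictly increasing," and I must confirm that the right triangle $\triangle PQQ_t$ is non-degenerate and that the law of cosines applies (no antipodal pair among $P, Q, Q_t$). The distance hypothesis $d(P,Q) < \tfrac{\pi}{2}R$ is exactly what is needed: it forces $\cos(kp) > 0$, which keeps $\cos(kf(t)) = \cos(kp)\cos(kt) > -\cos(kp) > -1$, so $f(t) < \pi R$ and no antipodality occurs. I expect the main obstacle to be nothing deeper than carefully stating this spherical range argument; the Euclidean and hyperbolic cases are immediate monotonicity observations, and the whole proof is short once the law of cosines with $\gamma = \pi/2$ is invoked.
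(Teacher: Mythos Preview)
Your proposal is correct and is exactly the paper's approach: the paper's entire proof is the single sentence ``Since $\cos\angle PQQ_t = 0$, this follows from the Generalized Law of Cosines (Theorem~\ref{thm:cosines-law}).'' You have simply unpacked that sentence case by case, including the spherical range check that the paper leaves implicit.
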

\begin{proof}
Since $\cos \angle PQQ_t = 0$, this follows from the Generalized Law of Cosines (Theorem \ref{thm:cosines-law}).
\end{proof}

The next lemma explains the role of the distance $m$ from $A$ to the midpoint $M$ of $\overline{BC}$.

\begin{lemma} \label{lem:persistent-triangle}
  $b(T)<d(T)$ if and only if  $\frac{a}{2}<m$.
\end{lemma}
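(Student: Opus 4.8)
The plan is to establish the equivalence $b(T) < d(T) \iff \tfrac{a}{2} < m$ by unpacking the two quantities in terms of the geometry along the perpendicular bisector $\ell$ of the longest side $\overline{BC}$. Recall $b(T) = \tfrac{a}{2}$, which is the distance from the midpoint $M$ to each of $B$ and $C$. First I would observe that $d(T)$, the death of $T$, is the smallest $r$ for which $B_r(A) \cap B_r(B) \cap B_r(C) \neq \emptyset$; since a point lies in $B_r(B) \cap B_r(C)$ precisely when it is within $r$ of both $B$ and $C$, the point minimizing the maximum distance to the set $\{A,B,C\}$ that one should test is a point equidistant from $B$ and $C$, i.e.\ a point on the perpendicular bisector $\ell$ of $\overline{BC}$ (by the perpendicular bisector lemma). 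So the search for $d(T)$ reduces to a one-parameter search along $\ell$.

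Next I would parametrize $\ell$ by arclength $t \geq 0$ starting at $M$, heading into the half-plane containing $A$, giving points $M_t$ with $d(M, M_t) = t$ and $M_0 = M$. For such a point, $d(M_t, B) = d(M_t, C)$, and by Lemma~\ref{lem:distincreasing} (applied with $P = B$, $Q = M$, noting $\overline{MM_t} \perp \overleftrightarrow{BM}$) this common distance $g(t) := d(M_t, B)$ is strictly increasing in $t$, with $g(0) = \tfrac{a}{2}$. Similarly I would argue that $h(t) := d(M_t, A)$ is \emph{decreasing} then increasing — more precisely, since $M$ is the foot of the perpendicular from... — actually the cleaner route: let $M'$ be the foot of the perpendicular from $A$ to $\ell$; then by Lemma~\ref{lem:distincreasing} again, $d(M_t, A)$ is strictly decreasing for $M_t$ on the $M$-side of $M'$ and strictly increasing beyond it, with $h(0) = m$. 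The death $d(T)$ is then $\min_t \max(g(t), h(t))$. Since $g$ is strictly increasing from $\tfrac a2$ and $h$ starts at $m$, one checks: if $m > \tfrac a2$, then at $t=0$ we have $h(0) = m > \tfrac a2 = g(0)$, and moving slightly forward decreases $\max(g,h)$ below $m$ but $g$ stays above $\tfrac a2$, so $d(T) > \tfrac a2 = b(T)$; conversely if $m \leq \tfrac a2$, then $g(0) = \tfrac a2 \geq m = h(0)$, so the minimum of $\max(g,h)$ is already attained at $t = 0$ with value $\tfrac a2$, giving $d(T) = \tfrac a2 = b(T)$.

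The main obstacle I anticipate is handling the geometry carefully enough in the $K > 0$ (spherical) case, where $\ell$ is a great circle and "moving along $\ell$" eventually wraps around; I would use the standing assumption that pairwise distances are less than $\tfrac{\pi}{\sqrt K}$ (and the hypothesis $d(P,Q) < \tfrac{\pi}{2}R$ needed to invoke Lemma~\ref{lem:distincreasing}) to confine attention to the relevant arc and rule out pathologies. A secondary point requiring care is justifying that the optimal intersection point for $d(T)$ genuinely lies on $\ell$ and in the correct half-plane — this follows because any candidate point $z$ can be replaced by its reflection-symmetrized competitor or by the point of $\ell$ on the segment realizing the relevant distances without increasing $\max(d(z,A), d(z,B), d(z,C))$, using convexity of balls in $M_K$ (valid within the diameter bounds imposed). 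Once these geometric reductions are in place, the equivalence falls out of the monotonicity supplied by Lemma~\ref{lem:distincreasing} as sketched above.
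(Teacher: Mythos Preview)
Your approach is workable but far more elaborate than the paper's. You reduce the computation of $d(T)$ to a one-parameter optimization along the perpendicular bisector $\ell$ of $\overline{BC}$, then analyze the competing monotone functions $g(t)=d(M_t,B)$ and $h(t)=d(M_t,A)$ via Lemma~\ref{lem:distincreasing}. The paper, by contrast, never computes or locates $d(T)$ at all: it simply observes that since $d(B,C)=a$, the triangle inequality forces $B_{a/2}(B)\cap B_{a/2}(C)=\{M\}$, so the triple intersection at radius $b(T)=\tfrac a2$ is nonempty if and only if $M\in B_{a/2}(A)$, i.e.\ $m\le \tfrac a2$. The equivalence $b(T)<d(T)\iff \tfrac a2<m$ follows immediately. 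Your route buys an explicit picture of where the minimizer of $\max(d(\cdot,A),d(\cdot,B),d(\cdot,C))$ sits on $\ell$ (foreshadowing the circumcenter analysis in later lemmas), but at the cost of the extra step you flag as an ``obstacle'': justifying that the optimum lies on $\ell$, which you only sketch via symmetrization/convexity. The paper's argument sidesteps that entirely because it only needs to test the single radius $r=\tfrac a2$, where the pairwise intersection degenerates to a point.
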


\begin{proof}
Note that $M$ is the unique element of $B_{\frac{a}{2}}(B)\cap B_{\frac{a}{2}}(C)$. Hence $B_{\frac{a}{2}}(A)\cap B_{\frac{a}{2}}(B)\cap B_{\frac{a}{2}}(C)\neq\emptyset$ if and only if $M\in B_{\frac{a}{2}}(A)$, thus $b(T)=d(T)$ if and only if $m\leq\frac a2$. Therefore, $b(T)<d(T)$ if and only if  $\frac{a}{2}<m$.
\end{proof}

\begin{lemma}\label{interiorCC}
  Suppose that $\frac{a}{2}<m$. Then the triangle $T$ has a circumcenter $P$, and $P$ is contained in the interior of $T$.
\end{lemma}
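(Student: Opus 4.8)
The plan is to work from the description $d(T)=\min\{r\;|\;B_r(A)\cap B_r(B)\cap B_r(C)\neq\emptyset\}$ and from the perpendicular bisector $\ell$ of the longest side $\overline{BC}$, on which the circumcenter of $T$ (once we know it exists) is forced to lie. Write $\rho=d(T)$. Since $\tfrac a2<m$, Lemma~\ref{lem:persistent-triangle} gives $\tfrac a2=b(T)<d(T)=\rho$; note also that for $r\le \tfrac a2$ the triple intersection is empty (it is contained in $B_{a/2}(B)\cap B_{a/2}(C)=\{M\}$, and $M\notin B_{a/2}(A)$ because $m>\tfrac a2$).

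First I would show that $T$ has a circumcircle with circumcenter $P$ and circumradius $\rho$. Choose $P$ in the (compact, nonempty) triple intersection at $r=\rho$. By minimality of $\rho$, at least two of $d(P,A),d(P,B),d(P,C)$ equal $\rho$: if all three were smaller we could shrink $r$, and if only one equalled $\rho$ we could move $P$ slightly toward that vertex. If exactly two equal $\rho$, say $d(P,V)<\rho$, then $P$ lies on the perpendicular bisector of the segment joining the other two vertices, and on that bisector the distance to either of those vertices is, by Lemma~\ref{lem:distincreasing}, strictly increasing away from its midpoint, where it equals half that side length. Unless $P$ is that midpoint we could slide $P$ toward it, keeping $d(\cdot,V)<\rho$, and shrink $r$; and if $P$ is the midpoint then $\rho$ equals half of some side, hence $\rho\le\tfrac a2$, contradicting $\rho>\tfrac a2$. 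So $d(P,A)=d(P,B)=d(P,C)=\rho$. (This also yields the ``furthermore'' clause of Proposition~\ref{prop:circumcenter}.)

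Next I would place $P$ inside $T$. Since $d(P,B)=d(P,C)=\rho$, $P\in\ell$; parametrize $\ell$ by signed distance $t$ from $M=\overline{BC}\cap\ell$, with $t>0$ pointing into the half-plane $H$ containing $A$, writing $M_t$ for the corresponding point. By Lemma~\ref{lem:distincreasing} (with $B$, and with $C$, as the external point; when $K>0$ the hypothesis $d(B,M)=\tfrac a2<\tfrac\pi2 R$ is just $a<\pi R$), the common value $d(B,M_t)=d(C,M_t)$ is strictly increasing in $|t|$ with minimum $\tfrac a2$ at $t=0$, so $P=M_{t^*}$ with $|t^*|>0$; in particular $P\neq M$. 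Reflection across $\overleftrightarrow{BC}$ fixes $B$ and $C$ and can only decrease the distance to $A$, so if $t^*<0$ the reflected point $M_{|t^*|}$ still satisfies $d(\cdot,B)=d(\cdot,C)=\rho$ but now $d(\cdot,A)<\rho$; nudging it toward $M$ along $\ell$ then produces a point within $\rho-\delta$ of all three vertices, contradicting $\rho=d(T)$. Hence $t^*>0$, i.e.\ $P\in H$. Now the ray $\{M_t\;|\;t\ge0\}$ leaves $M$ perpendicular to $\overline{BC}$ into $H$, so it enters the interior of $T$ and exits $\overline T$ at some $M_{t_1}$, $t_1>0$, lying on $\overline{AC}$ — it cannot meet the relative interior of $\overline{AB}$, since $d(A,B)=c\le b=d(A,C)$ puts $A$, hence all of $\overline{AB}$, on the closed side of $\ell$ containing $B$, and it cannot lie on $\overline{BC}$ since $\ell\perp\overline{BC}$. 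So $M_t$ is interior to $T$ for $0<t<t_1$, and it remains to see $t^*<t_1$. If $t^*=t_1$ then $P\in\overline{AC}$, so $\rho=d(P,A)=d(P,C)$ with $d(P,A)+d(P,C)=b$, forcing $\rho=\tfrac b2\le\tfrac a2$, a contradiction; if $t^*>t_1$ then $d(B,M_{t_1})=d(C,M_{t_1})<\rho$ and, since $M_{t_1}\in\overline{AC}$, $d(A,M_{t_1})=b-d(C,M_{t_1})<b-\tfrac a2\le\tfrac a2<\rho$, so $M_{t_1}$ lies strictly within $\rho$ of all three vertices, again contradicting $\rho=d(T)$. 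Therefore $0<t^*<t_1$ and $P=M_{t^*}$ is in the interior of $T$.

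The main obstacle is this last part: pinning down exactly where on $\ell$ the circumcenter sits — that it is on the $A$-side of $\overline{BC}$ and strictly before $\ell$ leaves the triangle — which is where the hypothesis $\tfrac a2<m$, the ordering $a\ge b\ge c$, the strict monotonicity of Lemma~\ref{lem:distincreasing}, and the reflection across $\overleftrightarrow{BC}$ all have to be combined. One also has to run the half-plane and convexity arguments uniformly across the Euclidean, spherical, and hyperbolic cases, and to treat the degenerate isoceles case $b=c$ (in which $\ell$ exits $\overline T$ through the vertex $A$) separately.
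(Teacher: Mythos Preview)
Your argument is correct, but it takes a genuinely different route from the paper's.

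The paper gives a short, direct intermediate-value-theorem argument entirely on the perpendicular bisector $\ell$ of $\overline{BC}$. It observes that $\ell$ must meet $\overline{AC}$ in a point $N$ (since $b\ge c$), and compares the quantity $d(\cdot,A)-d(\cdot,C)$ at the two ends of the segment $\overline{MN}\subset\overline T$: at $M$ it is $m-\tfrac a2>0$ by hypothesis, while at $N$ Lemma~\ref{lem:distincreasing} gives $d(N,C)>\tfrac a2$, hence $d(N,A)=b-d(N,C)<a-\tfrac a2=\tfrac a2<d(N,C)$. By the intermediate value theorem there is a point $P$ in the open segment with $d(P,A)=d(P,C)$; since $P\in\ell$ this is the circumcenter, and it lies in the interior of $T$ because the open segment $\overline{MN}$ does. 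No reference to $d(T)$, no reflection, no case analysis on $t^*$ relative to an exit time.

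Your approach instead manufactures the circumcenter variationally, as the point realizing $d(T)$, and then has to work to locate it: the reflection across $\overleftrightarrow{BC}$ to force $t^*>0$, and the comparison with the exit point $M_{t_1}$ to force $t^*<t_1$. The two proofs do share one ingredient---the exit point on $\overline{AC}$ and the inequality $d(A,\,\text{exit})<\tfrac a2$---but you use it as an upper barrier for $t^*$, whereas the paper uses it as one endpoint of an IVT. What your approach buys is the ``furthermore'' clause of Proposition~\ref{prop:circumcenter} (that $d(T)$ equals the circumradius) essentially for free, which the paper proves separately in Lemma~\ref{lem:d=cr}; the price is a noticeably longer argument with more moving parts (reflections, half-plane convexity, the isosceles degeneration) that the paper's IVT sidesteps entirely.
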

 
\begin{proof}
Assume that $\frac{a}{2}<m$. Note that the distance from $M$ to $C$ is $\frac{a}{2}$, so $d(M, A)>d(M, C)$. Let $l$ denote the perpendicular bisector of $\overline{BC}$. Then $l$ must intersect one of the other two sides of $T$.
Since $b \geq c$,
$l$ intersects $\overline{AC}$ in a point $N$. If $K>0$, then $d(M, C)=\frac{a}{2}<\frac{\pi}{2}R$. Hence for any $K$ we can apply Lemma \ref{lem:distincreasing} to get that $d(N, C)>d(M, C)=\frac{a}{2}$. Since the length of $\overline{AC}$ is $b\leq a$, we must have $d(A, N) = d(A,C) - d(N,C) < a - \frac{a}{2} = \frac{a}{2}$. Thus $d(N, A)< d(N, C)$.

Now, as a point moves along $l$ from $M$ to $N$, by the continuity of the distance function and the intermediate value theorem there must exist a point $P$ in the interior of $\overline{MN}$ where the $d(P, A)=d(P, C)$. Since $P$ is on the perpendicular bisector of $\overline{BC}$, we also have the distance from $P$ to $B$ is equal to the distance from $P$ to $C$. Thus, $P$ is a circumcenter of $T$. Since $P$ is in the interior of $\overline{MN}$ and this segment is contained in $T$ by construction,  we have that $P$ is in the interior of $T$.
\end{proof}

\begin{lemma}\label{lem:d=cr}
Suppose $\frac{a}{2}<m$ and there exists an $r>0$ and a point $D$ such that $D\in B_r(A)\cap B_r(B)\cap B_r(C)$ but $D$ is not the circumcenter of triangle $\triangle ABC$. Then there exists $r'<r$ such that  $B_{r'}(A)\cap B_{r'}(B)\cap B_{r'}(C)\neq\emptyset$.  
\end{lemma}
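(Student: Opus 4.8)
The plan is to argue by cases on how many of $d(D,A)$, $d(D,B)$, $d(D,C)$ equal $r$. Set $f(x)=\max\{d(x,A),d(x,B),d(x,C)\}$, so that $B_t(A)\cap B_t(B)\cap B_t(C)\neq\emptyset$ exactly when $f(x)\le t$ for some $x$, and hence $d(T)=\min_x f(x)$. Since $D\in B_r(A)\cap B_r(B)\cap B_r(C)$, we have $s:=f(D)\le r$; if $s<r$ we are done immediately by taking $r'=s$. So I would assume $s=r$ and let $S=\{V\in\{A,B,C\}:d(D,V)=r\}$, which is nonempty. Because $\tfrac{a}{2}<m$, Lemma~\ref{interiorCC} provides a circumcenter $P$ of $T$ lying in the interior of $T$. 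It then suffices, in each case, either to contradict $D\neq P$ or to produce a point $D'$ with $f(D')<r$ — in the latter event $r':=f(D')$ has the required property.

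If $|S|=1$, say $S=\{V\}$, I would move $D$ a short distance toward $V$ along the geodesic $\overline{DV}$: this strictly decreases $d(\cdot,V)$, while by continuity the other two distances remain below $r$, so $f$ drops below $r$. If $|S|=2$, say $S=\{V,W\}$ (so $d(D,U)<r$ for the remaining vertex $U$), then $D$ lies on the perpendicular bisector $\ell$ of $\overline{VW}$, and so does the midpoint $N$ of $\overline{VW}$. From Lemma~\ref{lem:persistent-triangle} and $\tfrac{a}{2}<m$ we get $r\ge d(T)>b(T)=\tfrac{a}{2}\ge\tfrac{1}{2}d(V,W)=d(N,V)$; since $d(N,V)=\tfrac{1}{2}d(V,W)\le\tfrac{a}{2}<\tfrac{\pi}{2}R$ when $K>0$, Lemma~\ref{lem:distincreasing} applies and tells us that the distance from a point of $\ell$ to $V$ (hence to $W$) is strictly increasing in its distance to $N$; in particular $D\neq N$. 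Moving $D$ a short distance along $\ell$ toward $N$ then strictly decreases $d(\cdot,V)=d(\cdot,W)$ and, for a small enough move, keeps $d(\cdot,U)<r$, so again $f$ drops below $r$.

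The remaining case $|S|=3$ says $D$ is equidistant from $A$, $B$, and $C$, hence $D$ is a circumcenter of $T$. For $K\le 0$ the circumcenter is unique, so $D=P$, contradicting $D\neq P$. For $K>0$ there are exactly two circumcenters, namely $P$ and its antipode $P^{*}$, with $d(P^{*},A)=\pi R-\rho$ where $\rho:=d(P,A)$ is the circumradius at $P$. Here I would invoke $\rho<\tfrac{\pi}{2}R$: since $A,B,C$ are not collinear the circumcircle is not a great circle (so $\rho\neq\tfrac{\pi}{2}R$), and since $P$ is the circumcenter lying in the interior of $T$ it is the ``near'' one (its antipode $P^{*}$ being the circumcenter at distance $>\tfrac{\pi}{2}R$). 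If $D=P$ we are done; otherwise $D=P^{*}$ and $r=f(D)=d(P^{*},A)=\pi R-\rho$, while $P\in B_{\rho}(A)\cap B_{\rho}(B)\cap B_{\rho}(C)$ with $\rho<\pi R-\rho=r$, so $r':=\rho$ works.

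I expect the main obstacle to be exactly this last point: pinning down that the circumcenter of $T$ lying in the interior has circumradius strictly less than $\tfrac{\pi}{2}R$ (equivalently, ruling out that $D$ is the ``far'' circumcenter), which ultimately relies on $T$ being small enough for the construction in Lemma~\ref{interiorCC} to go through. The cases $|S|\le 2$ are routine perturbations resting on the monotonicity of distances along perpendiculars (Lemma~\ref{lem:distincreasing}, itself a consequence of the Generalized Law of Cosines, Theorem~\ref{thm:cosines-law}).
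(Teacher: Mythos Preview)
Your approach is sound and takes a genuinely different route from the paper's. The paper does not case-split on $|S|$: instead it observes that since $D$ is not the circumcenter some vertex, say $A$, has $d(D,A)<r$, then drops a perpendicular from $D$ to $\overleftrightarrow{BC}$ and slides $D$ a small amount toward that line. By Lemma~\ref{lem:distincreasing} this strictly decreases both $d(\cdot,B)$ and $d(\cdot,C)$, while the move is small enough to keep $d(\cdot,A)<r$; a short subcase disposes of $D\in\overleftrightarrow{BC}$. Thus the paper uses a single geometric move to cover what you split into $|S|\in\{1,2\}$. On the other hand, your decomposition surfaces the spherical $|S|=3$ case explicitly, whereas the paper's opening sentence tacitly reads ``the circumcenter'' as ``a circumcenter'' and so does not separately treat $D=P^{*}$.

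The obstacle you flag is real but closes easily. Suppose toward a contradiction that the interior circumcenter $P$ has $\rho>\tfrac{\pi}{2}R$ (equality is excluded since $A,B,C$ are not collinear, so their circumcircle is not a great circle). Then $\pi R-\rho<\tfrac{\pi}{2}R$, so the closed ball $\overline{B_{\pi R-\rho}(P^{*})}$ is geodesically convex; as $A,B,C$ lie on its boundary and have pairwise distances $<\pi R$, the three edges of $T$ lie in this ball. Its complement, the open ball $B_{\rho}(P)$, is therefore a connected subset of $M_{K}\setminus T$ of area $>2\pi R^{2}$, hence contained in the exterior component; consequently the interior of $T$ lies in $\overline{B_{\pi R-\rho}(P^{*})}$. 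But $d(P,P^{*})=\pi R>\pi R-\rho$, so $P$ is outside that ball, contradicting $P$ lying in the interior of $T$ (Lemma~\ref{interiorCC}). Hence $\rho<\tfrac{\pi}{2}R$, and your $|S|=3$ argument goes through.
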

\begin{proof}
Since $D$ is not the circumcenter, there must exist at least one vertex whose distance to $D$ is less then $r$. Suppose without loss of generality that $d(D, A)<r$. First, suppose that $D\notin \overleftrightarrow{BC}$. Let $l$ be a line that contains $D$ and is perpendicular to $\overleftrightarrow{BC}$. Let $D^\prime\neq D$ be a point on the segment of $l$ from $D$ to $\overleftrightarrow{BC}$ such that $d(D, D')<r-d(D, A)$. Hence $d(D', A)<r$. Also, by construction and Lemma~\ref{lem:distincreasing} $D'$ is closer to $B$ and $C$ than $D$, hence $d(D', B)<d(D, B)\leq r$ and similarly $d(D', C)<r$. Letting $r'=\max\{d(D', A), d(D', B), d(D', C)\}$, we get that $r'<r$ and $D'\in B_{r'}(A)\cap B_{r'}(B)\cap B_{r'}(C)$.  

Now suppose $D\in  \overleftrightarrow{BC}$. Suppose without loss of generality that $d(D, B)\leq d(D, C)$. Then either $d(D, B)<d(D, C)\leq r$, or $d(D, B)=d(D, C)=\frac{a}{2}<m\leq r$ since $D$ is the midpoint of $\overline{BC}$ in this case. Either way we get that $d(D, B)<r$, so we can repeat the same proof as above using the point $B$ instead of $A$. 
\end{proof}

\begin{proof}[Proof of Proposition \ref{prop:circumcenter}]
(a) and (b) are equivalent by Lemma \ref{lem:persistent-triangle}. Lemma \ref{interiorCC} shows that (b) implies (c). Assume now that (c) holds, that is $T$ has a circumcircle and the circumcenter $P$ is in the interior of $T$. Since $P$ lies on the perpendicular bisector $l$ of $\overline{BC}$,  by Lemma \ref{lem:distincreasing} we get that $\frac{a}{2}=d(B, M)<d(B, P)=d(A, P)$.

Now let $Q$ be a point on $l$ such that $\overleftrightarrow{AQ}$ is perpendicular to $l$. Then $\overleftrightarrow{AQ}$ and $\overleftrightarrow{BC}$ are both perpendicular to $l$. When $K\leq 0$, this means that $\overleftrightarrow{AQ}$ and $\overleftrightarrow{BC}$ are parallel. When $K>0$, this means that the two intersection points of $\overleftrightarrow{AQ}$ and $\overleftrightarrow{BC}$ both have distance $\frac{\pi}{2}R$ from $l$. Furthermore, the line $l$ must intersect either $\overline{AB}$ or $\overline{AC}$.
Since $b \geq c$,
$l$ intersects $\overline{AC}$. So $d(A, Q)\leq d(A, C)\leq a<\frac{\pi}{2}R$. Thus, for all $K$ we get that $\overline{AQ}$ does not intersect $\overleftrightarrow{BC}$. Thus, $Q$ and $A$ are on the same side of $\overleftrightarrow{BC}$.
By a similar argument, it also follows that $\overleftrightarrow{AQ}$ does not intersect $\overline{BC}$.

We also note that $Q$ cannot be in the interior of $T$. Indeed, suppose $Q$ is in the interior of $T$, and \new{let} $S$ be the point where $l$ and $\overline{AC}$ intersect. Hence $Q$ lies in the interior of the segment $\overline{MS}$. Since $\overleftrightarrow{AQ}$ intersects one side of triangle $\triangle MSC$, it must intersect one of the other two sides. We have already shown that $\overleftrightarrow{AQ}$ does not intersect $\overline{MC}\subseteq\overline{BC}$, hence it must intersect $\overline{SC}\subseteq \overleftrightarrow{AC}$. But this means that $\overleftrightarrow{AQ}$ and $\overleftrightarrow{AC}$ must intersect in two non-antipodal points, a contradiction.




Since $Q$ and $P$ are on the same side of $\overleftrightarrow{BC}$ and $P$ is in the interior of $T$ and $Q$ is not, we must have $P\in\overline{QM}$. Thus, we get that $d(P, Q)<d(M, Q)$, and so by Lemma \ref{lem:distincreasing} $d(A, P)<d(A, M)=m$. Combining this with the previous inequality gives that $\frac{a}{2}<m$.


Finally, suppose (a), (b), and (c) hold. Let $r=d(T)$. By definition, $B_r(A)\cap B_r(B)\cap B_r(C)\neq\emptyset$.
Lemma \ref{lem:d=cr} then implies that $P$ must be the unique element of $B_r(A)\cap B_r(B)\cap B_r(C)$, and hence $r=d(P, A)=d(P, B)=d(P, C)$, that is $r$ is the circumradius of $T$.
\end{proof}

\subsection{The most persistent triangles}

In this section we show that among triangles $T$ with fixed birth $b(T)$, those with maximal persistence $p(T)=\frac{d(T)}{b(T)}$ are the equilateral triangles.

Let $T$ be a triangle with vertices $A$, $B$, and $C$, and corresponding edge lengths $a \geq b \geq c$.
Assume that $b(T) < d(T)$.
If $K>0$ then we also assume that 
$a < \frac{2\pi}{3}R$, 
where $R = \frac{1}{\sqrt{K}}$. This assumption is necessary for an equilateral triangle with side lengths $a$ to exist on $M_K$.

\begin{theorem} \label{thm:isoperimetric}
Suppose $T$ is not an equilateral triangle. Then there exists an equilateral triangle $T^\prime$ such that $b(T^\prime)=b(T)$ and $d(T^\prime)>d(T)$.
\end{theorem}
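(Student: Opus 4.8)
The plan is to reduce the statement to a one-variable optimization by fixing the birth $b(T) = \frac{a}{2}$ and showing that, among all triangles with longest side $a$, the death $d(T)$ — i.e.\ the circumradius, by Proposition~\ref{prop:circumcenter} — is strictly maximized by the equilateral triangle with side length $a$. First I would note that by Proposition~\ref{prop:circumcenter} the hypothesis $b(T) < d(T)$ means $T$ has a circumcircle with circumcenter $P$ in its interior, and $d(T)$ is exactly the circumradius $\rho = d(P,A) = d(P,B) = d(P,C)$. Since $b(T)$ depends only on $a$, it suffices to produce an equilateral triangle $T'$ with side length $a$ whose circumradius strictly exceeds $\rho$. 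So the real content is: \emph{fixing the longest side length $a$, the circumradius is strictly increasing as the triangle becomes ``more equilateral'', and is maximal at the equilateral triangle.}

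The key geometric tool is Lemma~\ref{lem:distincreasing} together with the picture already set up in the proof of Proposition~\ref{prop:circumcenter}: $M$ is the midpoint of $\overline{BC}$, $l$ is the perpendicular bisector of $\overline{BC}$, and $P$ lies on $l$ with $d(P,M) = m'$ for some $m' \le m$, where $m = d(A,M)$. Along $l$, Lemma~\ref{lem:distincreasing} says $d(B, \cdot)$ is strictly increasing in distance from $M$; hence the circumradius $\rho = d(P,B)$ is a strictly increasing function of $d(P,M)$. So maximizing $\rho$ over triangles with fixed $a$ amounts to pushing $P$ as far from $M$ along $l$ as possible — equivalently, to pushing the third vertex $A$ as far from $M$ as possible subject to the constraint that $a$ remains the \emph{longest} side, i.e.\ $b, c \le a$. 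The natural candidate for the extreme configuration is the one where $A$ is placed symmetrically on $l$ itself (so $b = c$) at the maximal height making $b = c = a$: that is the equilateral triangle. Concretely, I would argue: given a non-equilateral $T$ with sides $a \ge b \ge c$ (not all equal), first replace $T$ by the isoceles triangle $T''$ with the same base $\overline{BC}$ of length $a$ and apex $A''$ on $l$ with $d(A'', B) = d(A'', C) = a$ (this is the equilateral triangle $T'$), and show $d(A'', M) > d(A, M) = m \ge d(P,M)$, whence by Lemma~\ref{lem:distincreasing} the circumcenter $P'$ of $T'$ — which is forced onto $\overline{A''M} \cap l = \overline{A''M}$ — satisfies $d(P',M) > d(P,M)$, giving $\rho' > \rho$. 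Actually the cleanest route is to compare $d(P, M)$ for $T$ directly against the circumradius of $T'$ via the law of cosines in $M_K$ (Theorem~\ref{thm:cosines-law}) applied to $\triangle BPM$, reading off $\rho$ as an explicit increasing function of $d(P,M)$, and then bounding $d(P,M) < d(A'',M)/$(appropriate quantity) using that $A$ is at distance at most $a$ from both $B$ and $C$ while $A''$ realizes distance exactly $a$ and lies on $l$.

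The main obstacle I anticipate is making the ``push $A$ to the equilateral position'' step rigorous uniformly in $K$, including the spherical case $K > 0$ where distances along $l$ wrap around and where one must keep all relevant distances below the thresholds ($\frac{\pi}{2}R$ for Lemma~\ref{lem:distincreasing}, and $a < \frac{2\pi}{3}R$ for the equilateral triangle to exist). In particular one must verify that when $A$ moves to $A''$ on $l$ the circumcenter genuinely moves \emph{away} from $M$ and not past the antipodal regime, and that the map ``height of apex $\mapsto$ circumradius'' is monotone on the whole relevant range. I would handle this by working with the half-plane $H$ bounded by $\overleftrightarrow{BC}$ containing $A$, parametrizing points of $l \cap H$ by their (signed) distance $s$ from $M$, noting $\rho(s) := d(B, \cdot)|_{l}$ is strictly increasing in $s$ on $[0, \tfrac{\pi}{2}R)$ by Lemma~\ref{lem:distincreasing}, and checking that both $d(P,M)$ (for the original $T$) and $d(P',M)$ (for $T'$) lie in this range — the bound $a < \frac{2\pi}{3}R$ should be exactly what is needed to keep the equilateral circumcenter inside this window. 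Once monotonicity of $\rho(s)$ is secured, the inequality $d(P,M) < d(P',M)$ reduces to the elementary comparison that the apex of an isoceles triangle with legs $a$ sits higher over its base of length $a$ than any vertex $A$ with $d(A,B), d(A,C) \le a$ and $(b,c) \ne (a,a)$ sits over $\overleftrightarrow{BC}$ after projecting onto $l$, which follows from the law of cosines and a short case analysis (or a convexity/monotonicity argument) on the side lengths $b, c$.
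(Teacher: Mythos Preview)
Your overall reduction is correct and is the same one the paper uses: since $P$ lies on the perpendicular bisector $l$ of $\overline{BC}$ and $\rho = d(B,P)$ is strictly increasing in $d(P,M)$ by Lemma~\ref{lem:distincreasing}, it suffices to show that the equilateral triangle on base $\overline{BC}$ has its circumcenter farthest from $M$ along $l$.

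The gap is in how you propose to establish $d(P',M) > d(P,M)$. The chain $d(A'',M) > d(A,M) = m \ge d(P,M)$ says nothing about $d(P',M)$: the equilateral circumcenter $P'$ sits strictly between $M$ and $A''$ on $l$ (at one third of the way up in the Euclidean case), so $d(A'',M) > d(P,M)$ is far too weak. Your fallback, that the comparison ``reduces to'' showing the equilateral apex projects higher onto $l$ than any other admissible apex, is also not a valid reduction. In the Euclidean model with $B=(-\tfrac{a}{2},0)$, $C=(\tfrac{a}{2},0)$, $A=(x,y)$, the circumcenter height is $h = (x^2+y^2-\tfrac{a^2}{4})/(2y)$, which for fixed $y$ \emph{increases} with $|x|$; so ``higher projected apex'' does not imply ``higher circumcenter''. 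The statement you actually need --- that $h$ is maximized over $\{b\le a,\ c\le a\}$ at the equilateral vertex --- is true, but it requires an honest optimization (boundary analysis along $b=a$ and $c=a$), carried out uniformly in $K$, not the shortcut you sketch.

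The paper sidesteps this optimization with a two-step argument that keeps every comparison elementary. Step~1: if $a>b\ge c$, slide $A$ along the ray from $B$ through $A$ to $A'$ with $\max\{d(A',B),d(A',C)\}=a$. The perpendicular bisectors of $\overline{AB}$ and $\overline{A'B}$ are both perpendicular to $\overleftrightarrow{AB}$, with the latter shifted away from $B$; hence it meets $l$ farther from $M$, giving $d(P',M)>d(P,M)$. Step~2: from an isoceles triangle with $a=b>c$, move to the equilateral apex $A''$ on $l$. Now the perpendicular bisectors of $\overline{AC}$ and $\overline{A''C}$ are the angle bisectors at $C$; since the equilateral has the larger angle at $C$, its bisector meets $l$ farther from $M$. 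Each step compares where two perpendicular bisectors hit $l$ using a single monotonicity fact, with no global optimization. Your one-step route can be completed, but only by actually performing the maximization of $d(P,M)$ over all admissible apices in $M_K$, which is more work than you suggest and is precisely what the paper's decomposition is designed to avoid.
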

\begin{figure}[h]
	\centering
	\begin{tikzpicture}[scale=1.5]
									\draw [color=black] (0,0) -- (4,0) -- (1.98,2.34) -- (0,0);
									\draw [color=blue](2.38, 2.85)-- (4,0);
									\draw[fill] (0,0) circle [radius=0.05];
									\draw[fill] (4,0) circle [radius=0.05];
									\draw[fill] (1.98,2.34) circle [radius=0.05];
									\draw[fill] (2,.32) circle [radius=0.05];
									\draw[fill] (2,0) circle [radius=0.05];
									\draw[fill] (.83, .99) circle [radius=0.05];
									\draw[fill] (2.38, 2.85) circle [radius=0.05];
									\node [left] at (0,0) {B};
									\node [right] at (4,0) {C};
									\node [right] at (1.98,2.34) {A};
									\node [below] at (2,0) {M};
									\node [right] at (2,.32) {P};
									
									\draw[fill] (2.37,0) circle [radius=0.05];
									\node [below] at (2.37,0) {N};
									
									\draw[dashed] (0,2.01)--(3.56,-1);
									\draw[dashed] (2,3)-- (2,-1);
									\draw[dashed] (0, 1.7)--(3.19,-1);
									\draw[color=black] (1.98, 2.34)--(2.38, 2.85);
									
									\draw[dashed, color=blue] (0,2.43) -- (4.02,-.97);
									\draw[fill, color=blue] (2,.74) circle [radius=.05];
									\node [right] at (2,.74) {$P'$};
									\node [left] at (0,2.01) {$l_2$};
									\node [left] at (0,2.43) {$l'_2$};
									\node [left] at (2, -1) {$l_1$};
									\node[left] at (.83, .99){Q};	
									\node[right] at (2.38, 2.83){A$^\prime$};									

									\end{tikzpicture}	
\caption{Replacing $\triangle ABC$ with an isosceles triangle $\triangle A'BC$.} \label{fig:triangle}
\end{figure}
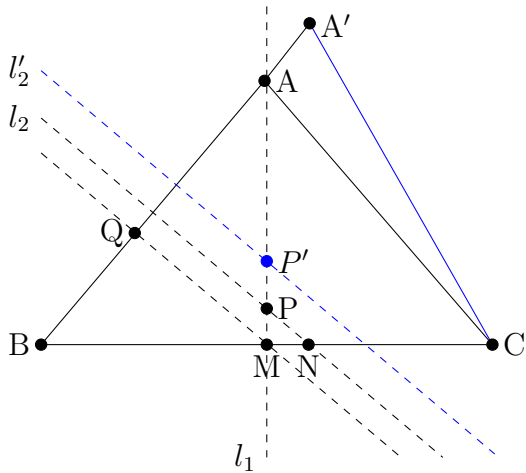
\begin{proof}

We will first show that  $T$ can be replaced by an isosceles triangle with two sides of length $a$. If $T$ is not already of this form, then longest side of $T$ is strictly bigger then the length of the other two sides, that is $a>b\geq c$.  Let $l_1$, $l_2$, and $l_3$ be the perpendicular bisectors to $\overline{BC}$,$\overline{AB}$,and $\overline{AC}$ respectively.
By Proposition \ref{prop:circumcenter},
these bisectors intersect in the point $P$, that is the circumcenter of $T$, which is in the interior of $T$.

Let $A'$ be the point on $\overleftrightarrow{AB}$ such that $A$ is between $A'$ and $B$ and $\max\{d(B, A'), d(C, A'\}=a$. Let $T'$ be the triangle formed by $A'$,$B$, and $C$.
See Figure~\ref{fig:triangle}.
By construction, $T$ has two sides of length $a$, and $a$ is still the length of the longest side of $T^\prime$.
Thus $b(T') = b(T)$.

We will show that $T'$ satisfies $d(T')>b(T')$ using Proposition \ref{prop:circumcenter}.
Let $M$ be the midpoint of $\overline{BC}$.
Since $a > b$, $l_2$ intersects $\overline{BC}$ at a point $N$.
Since $P$ is inside $T$, $N$ is on the opposite side of $l_1$ as $B$.

This means that $M$ and $B$ are on the same side of $l_2$, and hence $M$ and $A$ are on opposite sides of $l_2$.

 Now let $Q$ be the point on $\overleftrightarrow{AB}$ such that $\overleftrightarrow{MQ}$ is perpendicular to $\overleftrightarrow{AB}$. Then $\overleftrightarrow{MQ}$ and $l_2$ are both perpendicular to $\overleftrightarrow{AB}$. When $K\leq 0$, this means that $\overleftrightarrow{MQ}$ and $l_2$ are parallel. When $K>0$, this means that the two intersection points of $l_2$ and $\overleftrightarrow{MQ}$ both have distance $\frac{\pi}{2}R$ from $\overleftrightarrow{AB}$. In this case, $d(M, Q)\leq d(M, B)=\frac{a}{2}<\frac{\pi}{2}R$. Hence for all $K$ we get that the segment $\overline{MQ}$ does not intersect $l_2$. Thus $M$ and $Q$ are on the same side of $l_2$, which means that $Q$ and $A$ are on opposite sides of $l_2$.

 Since $A$ is closer to $l_2$ \new{than} $A'$, it follows that $A$ is closer to $Q$ \new{than} $A'$. Hence Lemma \ref{lem:distincreasing} implies that $d(A', M)>d(A, M)>\frac a2$, which means that the conclusions of Proposition \ref{prop:circumcenter} hold for $T'$.



Let $P'$ be the circumcenter of triangle $T'$ and $l'_2$ be the perpendicular bisector of $\overline{BA'}$. Then $P'$ lies on $l_1$ and $d(M, P') > d(M, P)$. Since $l_1$ is perpendicular to $\overleftrightarrow{BM}=\overleftrightarrow{BC}$, the distance from a point on $l_1$ to $B$ increases as that point moves away from $M$. Hence $d(B, P^\prime)>d(B, P)$, or equivalently $d(T^\prime)>d(T)$.

Thus, we can assume $T$ has two sides of length $a$, that is $a=b>c$.
%
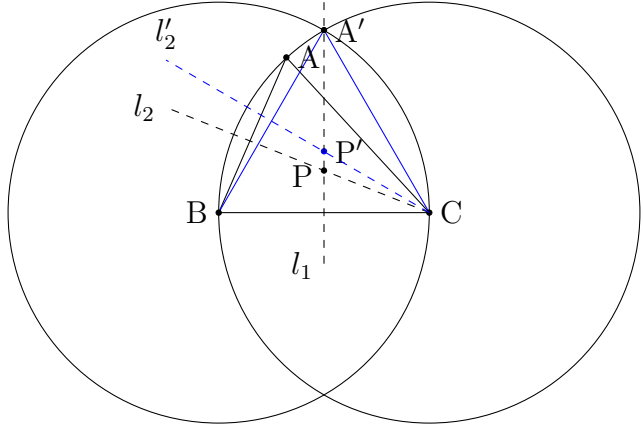
\begin{figure}[h]
	\centering
	\begin{tikzpicture}[scale=.7]
									\draw [color=black] (0,0) -- (4,0) -- (1.285,2.95) -- (0,0);
							
									\draw [color=blue] (0,0) -- (2, 3.47) -- (4, 0);
									\draw[fill] (0,0) circle [radius=0.05];
									\draw[fill] (4,0) circle [radius=0.05];
									\draw[fill] (1.285,2.95) circle [radius=0.05];
									\draw (0,0) circle [radius=4];
									\draw (4, 0) circle [radius=4];
									\draw[fill] (2, 3.47) circle [radius=0.05];
									\draw[fill] (2,.8) circle [radius=0.05];
									\draw[fill, color=blue] (2,1.166) circle [radius=0.05];

									\node [left] at (0,0) {B};
									\node [right] at (4,0) {C};
									\node [right] at (1.285,2.95) {A};
									\node [right] at (2, 3.47) {A$^\prime$};
									\node [left] at (-1,2) {$l_2$};
									\node [above] at (-1,2.9) {$l_2^\prime$};	
									\node [left] at (2, -1) {$l_1$};	
									\node [left] at (2,.7) {P};
									\node [right] at (2,1.166) {P$^\prime$};

									\draw[dashed] (2,4)-- (2,-1);
									\draw[dashed] (4, 0)--(-1, 2);	
									\draw[dashed, color=blue] (4, 0)--(-1, 2.9);
									\end{tikzpicture}	
\caption{Replacing an isosceles triangle $\triangle ABC$ with an equilateral triangle $\triangle A'BC$.} \label{fig:triangle2}
\end{figure}
Consider the circles of radius $a$ centered at $B$ and $C$, see Figure~\ref{fig:triangle2}. When $K\leq 0$, it is easy to see that they intersect at two points, one on each side of $\overleftrightarrow{BC}$. When $K>0$, let $M'$ denote the point on the sphere that is antipodal to $M$. Then $d(B, M)=\frac{a}{2}<a$ and since we assumed that $a<\frac{2\pi}{3}R$, $d(B, M')=\pi R-\frac{a}{2}>\frac{2\pi}{3}R>a$. Note that $M$ and $M'$ both lie on $l_1$, hence there exist two points on $l_1$, one on each side of $\overleftrightarrow{BC}$, whose distance to $B$ is equal to $a$. Since these points lie on $l_1$, they also have distance $a$ to $C$, and hence they lie on the intersection of the two circles.


Let $A^\prime$  be the intersection point of these two circles on the same side of $\overleftrightarrow{BC}$ as $A$.
Again, let $T^\prime$ be the triangle with vertices $A^\prime$, $B$, and $C$. By construction $T^\prime$ is an equilateral triangle with side lengths $a$, hence $b(T^\prime)=\frac{a}{2}=b(T)$. 

Let $l_2'$ be the perpendicular bisector of $\overline{A'B}$. By construction, the angle of $T$ at vertex $C$ is smaller then the angle of $T'$ at vertex $C$. Since these are both isosceles triangles, $l_2$ and $l_2'$ bisect these angles respectively. Hence, the angle formed by $\overline{BC}$ and $l_2$ is smaller then the angle formed by $\overline{BC}$ and $l_2'$. It follows that the point $P$ where $l_2$ intersects $l_1$ is closer to $\overline{BC}$ then the point $P'$ where $l_2'$ intersects $l_2$. As before, this means that $d(B, P')>d(B, P)$. Since $P$ and $P'$ are the circumcenters of $T$ and $T'$, we get that $d(T')>d(T)$.
\end{proof}

\subsection{Persistence of equilateral triangles} 

In this section, we give formulas for the persistence $p(T)$ where $T$ is an equilateral triangle in $M_K$. In general it is possible to give formulas for the persistence of arbitrary triangles in $M_K$ in terms of the side lengths of $T$ and $K$ since $b(T)$ is half the length of the longest side of $T$ and $d(T)$ is the circumradius of $T$. In the general case these formulas are not particularly enlightening, however for equilateral triangles the generalized law of sines (Theorem~\ref{thm:sine-law}) allows us to simplify the formulas considerably.

\begin{theorem} \label{thm:persistence}
	Let $T_{K,a}$ be an equilateral triangle in $M_K$ with side length $a$.
\begin{equation*}
  p(T_{K,a}) =
  \begin{cases}
\dfrac{2}{a\sqrt{-K}}\sinh^{-1}\bigg(\dfrac{2}{\sqrt{3}}\sinh\bigg(\dfrac{a\sqrt{-K}}{2}\bigg) \bigg) & \text{if } K<0\\
 \dfrac{2}{\sqrt{3}} & \text{if } K=0\\
\dfrac{2}{a\sqrt{K}}\sin^{-1}\bigg(\dfrac{2}{\sqrt{3}}\sin\bigg(\dfrac{a\sqrt{K}}{2}\bigg) \bigg) & \text{if } K>0\new{.}
  \end{cases}
\end{equation*}
\end{theorem}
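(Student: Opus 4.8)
The plan is to reduce the theorem to computing the circumradius of an equilateral triangle and then to extract that circumradius from the Generalized Law of Sines applied to a single auxiliary right triangle. Write $T = T_{K,a}$ with vertices $A,B,C$. Its longest side has length $a$, so $b(T) = \tfrac a2$. A short computation with the Generalized Law of Cosines (Theorem~\ref{thm:cosines-law}), applied to the right triangle cut off by the altitude from a vertex to the opposite side, shows that $\tfrac a2 < m$, where $m$ is the distance from a vertex to the midpoint of the opposite side; when $K>0$ the hypothesis $a < \tfrac{2\pi}{3}R$ is exactly what makes this — and the existence of $T_{K,a}$ — work. Hence condition (b) of Proposition~\ref{prop:circumcenter} holds, so $b(T)<d(T)$ and $d(T)$ equals the circumradius $\rho$ of $T$; therefore $p(T) = d(T)/b(T) = 2\rho/a$, and it remains only to compute $\rho$.

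To do this, let $P$ be the interior circumcenter of $T$ (which exists by Proposition~\ref{prop:circumcenter}), fix the side $\overline{BC}$, and let $M$ be its midpoint. First I would note that $\overline{PM}$ lies along the perpendicular bisector of $\overline{BC}$, so $\angle PMB = \tfrac\pi2$, while $d(P,B) = \rho$ and $d(M,B) = \tfrac a2$. Next, the order-three rotational symmetry of the equilateral triangle is realized by an isometry of $M_K$ fixing $P$ and cyclically permuting $A,B,C$; it permutes the angles $\angle APB,\angle BPC,\angle CPA$, which are therefore equal, and since they sum to $2\pi$ each equals $\tfrac{2\pi}{3}$. The reflection across $\overleftrightarrow{AM}$ fixes $P$ and $M$ and swaps $B$ and $C$, so $\overline{PM}$ bisects $\angle BPC$ and $\angle BPM = \tfrac\pi3$. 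Thus $\triangle PMB$ is a right triangle whose angle at $P$ is $\tfrac\pi3$, whose hypotenuse (opposite the right angle at $M$) is $\rho$, and whose leg opposite the $\tfrac\pi3$ angle is $\tfrac a2$.

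Now apply the Generalized Law of Sines (Theorem~\ref{thm:sine-law}) to $\triangle PMB$ and solve for $\rho$. When $K>0$ this yields $\sin(\rho\sqrt K) = \tfrac{2}{\sqrt3}\sin\!\big(\tfrac{a\sqrt K}{2}\big)$, and since $a < \tfrac{2\pi}{3}R$ forces the right-hand side into $[0,1)$ we get $\rho = \tfrac{1}{\sqrt K}\sin^{-1}\!\big(\tfrac{2}{\sqrt3}\sin\!\big(\tfrac{a\sqrt K}{2}\big)\big)$. When $K<0$ it yields $\sinh(\rho\sqrt{-K}) = \tfrac{2}{\sqrt3}\sinh\!\big(\tfrac{a\sqrt{-K}}{2}\big)$, hence $\rho = \tfrac{1}{\sqrt{-K}}\sinh^{-1}\!\big(\tfrac{2}{\sqrt3}\sinh\!\big(\tfrac{a\sqrt{-K}}{2}\big)\big)$. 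When $K=0$ it gives $\rho = \tfrac{a/2}{\sin(\pi/3)} = \tfrac{a}{\sqrt3}$. Substituting each of these into $p(T) = 2\rho/a$ produces exactly the three formulas in the statement.

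The step I expect to be the main obstacle is the middle paragraph: making rigorous that the angle of $\triangle PMB$ at $P$ is exactly $\tfrac\pi3$ regardless of $K$. This requires knowing that $P$ lies in the interior of $T$ (supplied by Proposition~\ref{prop:circumcenter}) and that the dihedral symmetry group of the equilateral triangle is realized by isometries of $M_K$, so that the three angles at $P$ are genuinely equal. Once $\triangle PMB$ is pinned down as a right triangle with two known angles and one known side, everything else is a routine application of the law of sines, together with the bookkeeping that the argument of $\sin^{-1}$ stays in $[0,1)$ in the spherical case. As a cross-check, one can instead feed the isosceles triangle $\triangle PBC$, with apex angle $\tfrac{2\pi}{3}$ at $P$ and legs $\rho$, into the Generalized Law of Cosines; the double-angle identities reduce it to the same formulas.
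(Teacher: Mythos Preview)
Your proposal is correct and follows essentially the same route as the paper: both single out the right triangle with vertices the circumcenter $P$, a vertex of $T$, and the midpoint $M$ of an adjacent side, establish the right angle at $M$ and the $\tfrac{\pi}{3}$ angle at $P$ via the symmetry of the equilateral triangle, and then read off the circumradius from the Generalized Law of Sines. The only notable difference is bookkeeping: the paper obtains $\angle APM = \tfrac{\pi}{3}$ by observing that the three perpendicular bisectors cut $T$ into six congruent triangles at $P$, whereas you argue via the rotational and reflective isometries of $M_K$; and you are somewhat more explicit than the paper about invoking Proposition~\ref{prop:circumcenter} to guarantee $P$ exists and lies in the interior before using it.
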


\begin{proof}
  Let $T$ be a equilateral triangle in $M_K$ with vertices $A$, $B$, and $C$ and side lengths a.
  Let $M$ be the midpoint of $AB$, and let $P$ be the circumcenter of $T$.
See Figure~\ref{fig:triangle3}.
Since $P$ is the circumcenter, it is the intersection of the perpendicular bisectors of the sides of $T$. Thus, $\angle AMP = \pi/2$. Moreover, these perpendicular bisector split $T$ into 6 congruent triangles which all contain and surround the vertex $P$. It follows that the angles of these triangles at the vertex $P$ sum to $2\pi$, and since the angles are all congruent we get $\angle APM = \pi/3$. Furthermore, the length of $\overline{AM}$ is $b(T)$ and the length of $\overline{AP}$ is $d(T)$.
	
	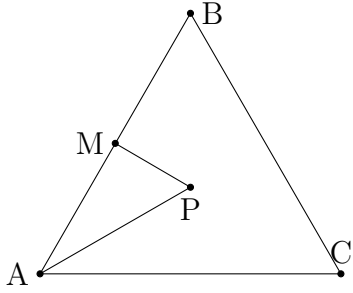
\begin{figure}[h]
	\centering
	\begin{tikzpicture}[scale=.80]
									\draw [color=black] (0,0) -- (5,0) -- (2.5,4.333) -- (0,0);
									\draw[fill] (0,0) circle [radius=0.05];
									\draw[fill] (2.5,4.333) circle [radius=0.05];
									\draw[fill] (1.25,2.17) circle [radius=0.05];
									\draw[fill] (5,0) circle [radius=0.05];
									\draw[fill] (2.5,1.44) circle [radius=0.05];
									\draw [color=black] (0,0) -- (2.5,1.44) -- (1.25,2.17);
									\node [left] at (0,0) {A};
									\node [above] at (5,0) {C};
									\node [right] at (2.5,4.333) {B};
									\node [left] at (1.25,2.17) {M};
									\node [below] at (2.5,1.44) {P};
									\end{tikzpicture}	
\caption{$b(T)=d(A, M)$ and $d(T)=d(A, P)$.} \label{fig:triangle3}
\end{figure}

We apply the generalized law of sines (Theorem~\ref{thm:sine-law}) to the triangle $\Delta AMP$.
For $K=0$, we have 
\begin{equation*}
  \frac{d(T)}{b(T)} = \frac{d(A, P)}{d(A, M)} = \dfrac{\sin(\angle AMP)}{\sin(\angle APM} = \frac{\sin \pi/2}{\sin \pi/3} = \frac{2}{\sqrt{3}}.
\end{equation*}
For $K > 0$,
we have 
\[
	\dfrac{\sin(d(T)\sqrt{K})}{\sin(b(T)\sqrt{K})} 
= \dfrac{2}{\sqrt{3}}.
\]	
	Then 
	$$d(T) = \dfrac{1}{\sqrt{K}}\sin^{-1}\bigg(\dfrac{2}{\sqrt{3}}\sin\bigg(\dfrac{a\sqrt{K}}{2}\bigg) \bigg).$$
	Similarly when $K<0$, 
        \begin{equation*}
          d(T) = \dfrac{1}{\sqrt{-K}}\sinh^{-1}\bigg(\dfrac{2}{\sqrt{3}}\sinh\bigg(\dfrac{a\sqrt{-K}}{2}\bigg) \bigg). \qedhere
        \end{equation*}
      \end{proof}

      From these formulas, \new{one} can easily compute that for any fixed $a$, the function which assigns to $K$ the persistence of an equilateral triangle of side length $a$ in $M_K$ is an increasing \new{and} continuous function. Indeed, the fact that this function converges to $\dfrac{2}{\sqrt{3}}$ as $K\to 0$  is straightforward application of l'H\^{o}pital's rule.
      To get a sense of scale, if $a=1$ then
      the values for $p(T)$ for $K = -2$ ,$-1$, $0$, $1$, and $2$ are approximately $1.1294$, $1.1406$, $1.1547$, $1.1733$, and $1.1996$.

\begin{corollary} \label{cor:increasing}
\new{
  Let $a>0$. Let $p_a(K)$ denote the persistence of an equilateral triangle of side length $a$ in a surface of constant curvature $K$. Then $p_a(K)$ is a continuous and increasing function.}
\end{corollary}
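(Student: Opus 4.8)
The plan is to argue directly from the closed forms in Theorem~\ref{thm:persistence}. For $K\neq 0$ put $s=\tfrac a2\sqrt{|K|}$; the formulas then become $p_a(K)=h(s)$ when $K<0$ and $p_a(K)=g(s)$ when $K>0$, where
\[
g(s)=\frac1s\sin^{-1}\!\Bigl(\tfrac{2}{\sqrt3}\sin s\Bigr),
\qquad
h(s)=\frac1s\sinh^{-1}\!\Bigl(\tfrac{2}{\sqrt3}\sinh s\Bigr),
\]
with $s\in(0,\tfrac\pi3)$ in the spherical case (the bound $s<\tfrac\pi3$ being exactly the condition $a<\tfrac{2\pi}{3}R$ needed for the equilateral triangle to exist) and $s\in(0,\infty)$ in the hyperbolic case. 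The reparametrization $K\mapsto s$ is strictly increasing for $K>0$ and strictly decreasing for $K<0$, so the corollary reduces to three claims: (i) $g$ is strictly increasing on $(0,\tfrac\pi3)$; (ii) $h$ is strictly decreasing on $(0,\infty)$; (iii) $\lim_{s\to0^+}g(s)=\lim_{s\to0^+}h(s)=\tfrac{2}{\sqrt3}=p_a(0)$.

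For (i) and (ii) I would invoke the elementary fact that if $f\in C^1$ with $f(0)=0$ and $f'$ strictly increasing (resp.\ strictly decreasing), then $f(s)/s$ — the secant slope through the origin of a strictly convex (resp.\ concave) graph — is strictly increasing (resp.\ strictly decreasing). Writing $\phi(s)=\sin^{-1}(\tfrac2{\sqrt3}\sin s)$ and $\chi(s)=\sinh^{-1}(\tfrac2{\sqrt3}\sinh s)$, a short computation using the identities $1-\tfrac43\sin^2 s=\tfrac13(4\cos^2 s-1)$ and $1+\tfrac43\sinh^2 s=\tfrac13(4\cosh^2 s-1)$ yields
\[
\phi'(s)=\frac{2\cos s}{\sqrt{4\cos^2 s-1}},\qquad
\chi'(s)=\frac{2\cosh s}{\sqrt{4\cosh^2 s-1}}.
\]
Since $\bigl(\tfrac{2c}{\sqrt{4c^2-1}}\bigr)^2=1+\tfrac1{4c^2-1}$, the function $c\mapsto \tfrac{2c}{\sqrt{4c^2-1}}$ is strictly decreasing on $(\tfrac12,\infty)$; composing it with the strictly decreasing $\cos$ on $(0,\tfrac\pi3)$ shows $\phi'$ is strictly increasing, so $\phi$ is strictly convex and $g=\phi/s$ is strictly increasing, while composing it with the strictly increasing $\cosh$ on $(0,\infty)$ shows $\chi'$ is strictly decreasing, so $\chi$ is strictly concave and $h=\chi/s$ is strictly decreasing.

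For (iii), since $\phi(0)=\chi(0)=0$ both limits equal $\phi'(0)=\chi'(0)=\tfrac2{\sqrt3}$ (the l'H\^{o}pital computation already noted after Theorem~\ref{thm:persistence}), and $\tfrac2{\sqrt3}=p_a(0)$. Continuity of $p_a$ away from $K=0$ is immediate from the formulas, and continuity at $K=0$ follows from (iii). For monotonicity: (ii) composed with the decreasing reparametrization makes $p_a$ strictly increasing on the negative part of its domain, (i) composed with the increasing reparametrization makes it strictly increasing on the positive part, and combining (i), (ii), (iii) gives $h(s)<\tfrac2{\sqrt3}<g(s')$ for all admissible $s,s'>0$, hence $p_a(K)<p_a(0)<p_a(K')$ whenever $K<0<K'$. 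Thus $p_a$ is strictly increasing on its whole domain. The only real obstacle is the monotonicity of $\phi'$ and $\chi'$ in (i) and (ii): the crux is the simplification to $2\cos s/\sqrt{4\cos^2 s-1}$ and $2\cosh s/\sqrt{4\cosh^2 s-1}$ together with the observation that $c\mapsto 2c/\sqrt{4c^2-1}$ is monotone; everything else is bookkeeping about the sign of $dK/ds$ and the behaviour at $K=0$.
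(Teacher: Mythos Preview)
Your argument is correct and complete. The paper itself gives no proof beyond the remark that ``one can easily compute'' monotonicity from the formulas of Theorem~\ref{thm:persistence} and that continuity at $K=0$ follows from l'H\^{o}pital; your substitution $s=\tfrac{a}{2}\sqrt{|K|}$ together with the convexity/concavity argument via the explicit derivatives $\phi'(s)=2\cos s/\sqrt{4\cos^2 s-1}$ and $\chi'(s)=2\cosh s/\sqrt{4\cosh^2 s-1}$ is exactly the kind of computation the paper leaves to the reader, carried out cleanly.
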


Combining Theorems \ref{thm:isoperimetric} and \ref{thm:persistence}
\new{
  and Corollary~\ref{cor:increasing},
  }
we obtain Theorem~\ref{thm:invertible}.

\section{Estimating curvature using persistence} \label{sec:estimate-K}

In this section, we demonstrate that using the persistent homology of the Vietoris-Rips complex of points sampled on disks of constant curvature we are able to produce good estimates of the curvature.

\subsection{Sampling points uniformly for a unit disk of constant curvature}
\label{sec:sample}

We need to sample points uniformly (with respect to the area measure) from disks of constant curvature with radius one.
See Figure~\ref{fig:sampled-points}.

\subsubsection{Euclidean case}

We start with the Euclidean case, $K=0$.
Consider the disk of radius one centered at the origin. 
Parametrize points on this disk by an angle, $0 \leq \theta < 2\pi$, and a radius, $0 \leq r \leq 1$.
We will sample $\theta$ and $r$ independently. 
For $\theta$, sample uniformly, drawing from the uniform distribution on $[0,2\pi]$.
For $r$, the  probability  of a point lying within the disk of radius $r$ should  equal the  proportion to the area \new{of} that disk relative to the area of  the  disk of radius 1. The area of a disk of radius $r$ equals $\pi r^2$. So the cumulative distribution function of $r$ is given by $$F(r) = \dfrac{\pi r^2}{\pi 1^2} = r^2$$ and the inverse cumulative distribution function is given by
\[
  r = F^{-1}(u) = \sqrt{u}.
\]
So we can sample $u$ uniformly on $[0,1]$ and use $F^{-1}(u)$ to obtain the desired sample of $r$
(see Section~\ref{sec:inversion-sampling}).



\begin{figure}[h]
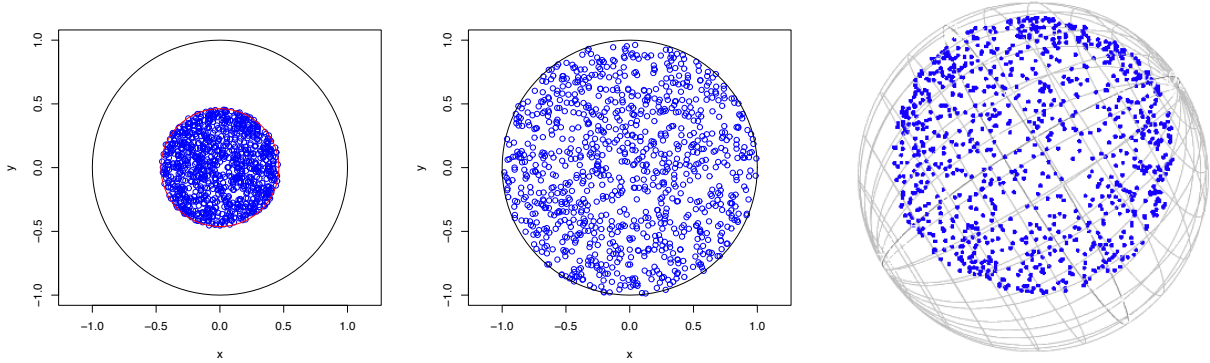

  \centering
  \includegraphics[width=.33\linewidth]{hyperbolic1000pts.pdf}%
  \includegraphics[width=.33\linewidth]{euclid1000pts.pdf}%
  \includegraphics[width=.33\linewidth]{spherical1000pts.pdf}%
  \caption{Plots of $1000$ points sampled independently, with each point sampled uniformly with respect to area for the unit disk on the Poincar\'e disk model of the hyperbolic plane (left), the Euclidean plane (center), and a sphere of radius $1$ (right).}
  \label{fig:sampled-points}
\end{figure}


\subsubsection{Spherical case}

Next, consider the spherical case, $K>0$.
  We will assume that $K \leq 2$ which will ensure that we are able to embed a disk with radius one on the upper hemisphere of a sphere with constant curvature $K$.


Our sampling procedure follows the Euclidean case.
We parametrize points on a disk of radius one with an angle $0 \leq \theta \leq 2\pi$ and a radius $0 \leq r \leq 1$.
We sample $\theta$ and $r$ independently, sampling $\theta$ from uniform distribution on $[0,2\pi]$.
For $r$, the disk of radius $r$ has area $\frac{4\pi}{K} \sin^2(\frac{r\sqrt{K}}{2})$.
So the cumulative distribution for $r$ is given by
\[
  F(r) = \frac{\frac{4\pi}{K} \sin^2(\frac{r\sqrt{K}}{2})}{\frac{4\pi}{K} \sin^2(\frac{1\sqrt{K}}{2})},
\]
and the inverse cumulative distribution is given by
\[
  r = F^{-1}(u) =  \frac{2}{\sqrt{K}}\sin^{-1}\left(\sqrt{u}\sin\left(\frac{\sqrt{K}}{2}\right)\right).
\]
So we can sample $u$ uniformly on $[0,1]$ and use $F^{-1}(u)$ to sample $r$.

\subsubsection{Hyperbolic case}

It remains to consider the hyperbolic case, $K<0$.
We parametrize points on this disk by an angle $0 \leq \theta < 2\pi$ and a  radius $0 \leq r \leq 1$.
As before, we sample $\theta$ and $r$ independently, taking $\theta$ from the uniform distribution on $[0,2\pi]$.
The area of a hyperbolic disk of hyperbolic radius $r$ is given by
$\frac{4\pi}{-K} \sinh^2(\frac{r\sqrt{-K}}{2})$.
Thus the cumulative distribution of $r$ is given by 
\[
  F(r) = \dfrac{\frac{4\pi}{-K} \sinh^2(\frac{r\sqrt{-K}}{2})}{\frac{4\pi}{-K} \sinh^2(\frac{1\sqrt{-K}}{2})}
\]
and the inverse cumulative distribution is given by
\[
  r = F^{-1}(u) = \frac{2}{\sqrt{-K}} \sinh^{-1}\left(\sqrt{u}\sinh\left(\frac{\sqrt{-K}}{2}\right)\right).
\]

\subsection{Average death vectors and average persistence landscapes} \label{sec:average}

For a given curvature $K$, we independently sample $1000$ points from the unit disk in the surface of constant curvature $K$, uniformly with respect to the area measure
(Section~\ref{sec:sample}) and compute the pairwise distances between such points (Section~\ref{sec:distance}).
From this pairwise distance data, we compute the persistent homology of the corresponding Vietoris-Rips complex (Section \ref{sec:sc} and \ref{sec:ph}).
We encode the persistent (reduced) homology in degree $0$ as a death vector and the persistent homology in degree $1$ as a persistence landscape (Section~\ref{sec:ph}).
We then repeat this $100$ times and
average the vectors to obtain an average death vector and average persistence landscape.
See \new{Figures} \ref{fig:adv} and \ref{fig:apl}.

\begin{figure}[h]
\centering
\includegraphics[width=.33\linewidth]{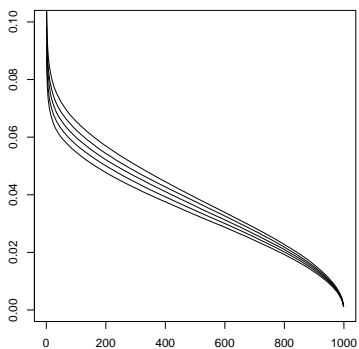}
\caption{The average death vectors from $100$ samples of $1000$ points sampled uniformly from the unit disk in the surface of constant curvature for curvatures $K=-2,-1,0,1,2$ (top to bottom).}
\label{fig:adv}
\end{figure}

\begin{figure}[h]
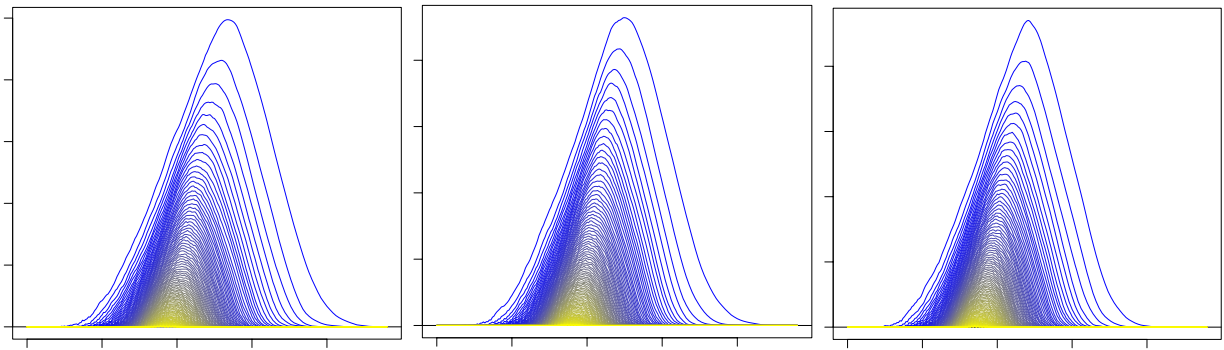

  \centering
\includegraphics[width=.33\linewidth]{apl_deg1_hyp_trim.pdf}%
\includegraphics[width=.33\linewidth]{apl_deg1_euc_trim.pdf}%
\includegraphics[width=.33\linewidth]{apl_deg1_sph_trim.pdf}%
  \caption{The average persistence landscapes from $100$ samples of $1000$ points sampled uniformly from the unit disk in the hyperbolic plane (left), the Euclidean plane (center), and a sphere of radius one (right).}
  \label{fig:apl}
\end{figure}

\subsection{Supervised learning} \label{sec:supervised}

As training data, for each $K \in \{-2,-1.96,-1.92, \ldots, 1.96, 2\}$ we compute the average death vector and average persistence landscape as in Section~\ref{sec:average}.
Call the average death vectors the \emph{$H_0$ training vectors},
call the average persistence landscapes the \emph{$H_1$ training vectors}, and call the concatenations of the average death vectors and the average persistence landscapes the \emph{$H_0$-and-$H_1$ training vectors}.

For testing data, sample 100 curvatures uniformly in $[-2,2]$ and compute their corresponding average death vectors and average persistence landscapes as in Section~\ref{sec:average}.
Call the average death vectors the \emph{$H_0$ testing vectors},
call the average persistence landscapes the \emph{$H_1$ testing vectors}, and call the concatenations of the average death vectors and the average persistence landscapes the \emph{$H_0$-and-$H_1$ testing vectors}.
Now we assume that the testing curvatures are unknown.

\subsubsection{Nearest neighbors} \label{sec:nn}

For each testing vector, find the three nearest training vectors
\new{using the Euclidean distance.}
Estimate the curvature of the testing vector to be the weighted average of the curvatures of the three nearest training vectors,
\new{with the weighting given by the reciprocal of the distance.}
Results are given in Figure~\ref{fig:scatterplot-distance} and  Table~\ref{tab:rmse-distance}.

\subsubsection{Support vector regression}
\label{sec:svr-comp}

We apply support vector regression to the training data to construct a model.
We use a linear loss function and the dot product on the training vectors.
This dot product corresponds to the inner product on the space of persistence landscapes~\cite{Bub}.
We use the ksvm function in the kernlab package~\cite{kernlab} in R with cost $100$ (and $\varepsilon=0$).
%
Apply the testing vectors to the linear model computed using support vector regression to estimate the corresponding curvature.
Results are given in Figure~\ref{fig:scatterplot-distance} and  Table~\ref{tab:rmse-distance}.

\subsubsection{Quantile regression}
\label{sec:quantile}

We use the pinball loss function (Section~\ref{sec:svr}) and the dot product on the training $H_0$-and-$H_1$ vectors to construct models that estimate the $\tau$-quantiles for $\tau = 0.05$, $0.5$, and $0.95$.
We use the kqr function in the kernlab package~\cite{kernlab} in R with cost $100$.
Applying the testing vectors to these models we obtain the curves given in Figure~\ref{fig:quantile}.

\begin{figure}[h]
\centering
\includegraphics[width=.23\linewidth]{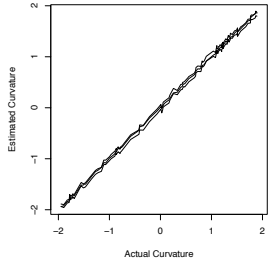}
\caption{Quantile regression for the $H_0$-and-$H_1$ vectors. The middle curve is the estimated median and the bottom and top curves are the estimated fifth and ninety-fifth percentiles, respectively. Testing vectors are obtained for 100 randomly chosen curvatures in $[-2,2]$ and the estimated percentiles for each of these are connected by line segments.}
\label{fig:quantile}
\end{figure}

\subsection{Unsupervised learning} \label{sec:unsupervised}

Remarkably, we are still able to provide reasonable curvature estimates (up to sign) without any training data. 

Sample 100 curvatures uniformly in $[-2,2]$.
For each of these compute the average death vectors and average persistence landscapes as in Section~\ref{sec:average}. 
Call the average death vectors the \emph{$H_0$ vectors},
call the average persistence landscapes the \emph{$H_1$ vectors}, and call the concatenations of the average death vectors and the average persistence landscapes the \emph{$H_0$-and-$H_1$ vectors}.

For each of these three sets of vectors apply principal components analysis (PCA). The projections onto the first two PCA coordinates for the $H_0$-and-$H_1$ vectors \new{are} given in Figure~\ref{fig:pca}.
Rescale the first principal component axis to [-2,2] and use this to estimate the curvature. 
Results are given in Figure~\ref{fig:scatterplot-distance} and Table~\ref{tab:rmse-distance}. 
Note that with probability $\frac{1}{2}$ this procedure will choose the wrong sign for the estimated curvature.

\begin{figure}[h]
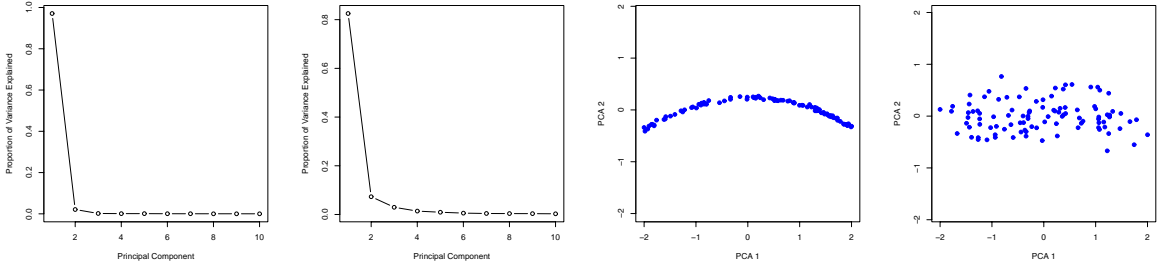

\centering
\includegraphics[width=.23\linewidth]{pca_prop_var_distance_both}
\includegraphics[width=.23\linewidth]{pca_prop_var_rank_both}
\includegraphics[width=.23\linewidth]{pca_plot_distance_both}
\includegraphics[width=.23\linewidth]{pca_plot_rank_both}
\caption{Principal components analysis for the $H_0$-and-$H_1$ vectors. The proportion of the variance explained by the first ten principal components using distances (far left) and ordinals (center left).  Projection of the vectors onto the first two principal components using distances (center right) and ordinals (far right).}
\label{fig:pca}
\end{figure}

\begin{table}[h]
\caption{The root mean squared errors of the estimated curvature using pairwise distances.}
\centering
\begin{tabular}{|l|l|c|c|c|}
  \hline
  & & $H_0$ & $H_1$ & $H_0$-and-$H_1$ \\
  \hline
  Supervised learning & Nearest neighbors & 0.032 & 0.070 & 0.056\\
  & Support Vector Regression & 0.027 & 0.038 & 0.017\\
  \hline
  Unsupervised learning & First Principal Component & 0.091 & 0.139 & 0.128 \\
  \hline
\end{tabular}
\label{tab:rmse-distance}                                              
\end{table}



\subsection{Using ordinals of sorted pairwise distances}
\label{sec:rank}

In this section we show that our methods do not depend on differences in distributions of the pairwise distances. 

In neuroscience~\cite{Giusti:2015}, certain observed correlations are believed to be given by an unknown monotonic function on an underlying distance in the relevant stimulus space.
Therefore, the distance data should only be used up to monotone transformations.
This can be done by replacing scalar values with ordinal values.

\begin{figure}[h]
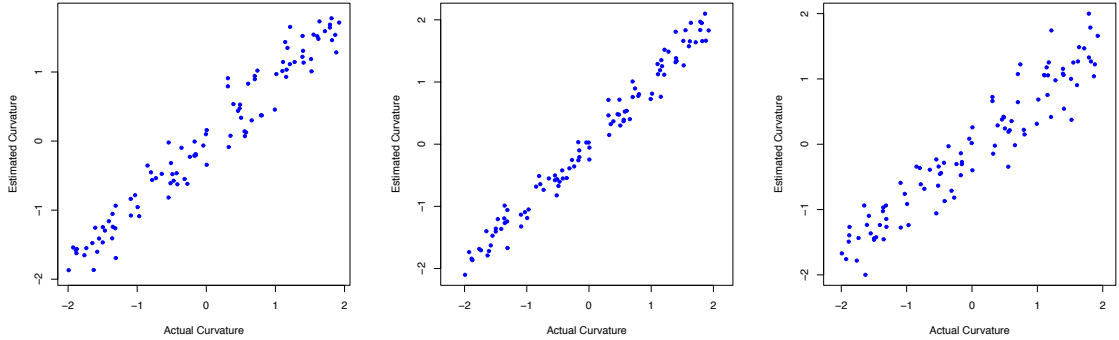

\centering
\includegraphics[width=.3\linewidth]{nn_rank_both}
\includegraphics[width=.3\linewidth]{svr_rank_both}
\includegraphics[width=.3\linewidth]{pca1_rank_both}
\caption{Plots showing actual curvature and estimated curvature using  $H_0$ and $H_1$ from ordinal distance data, for nearest neighbors (left), support vector regression (center), and the first principal component (right).}
\label{fig:scatterplot-rank}
\end{figure}

\begin{table}[h]
\caption{The root mean squared errors of the estimated curvatures upon replacing distances with their ordinal numbers.}
\centering
\begin{tabular}{|l|l|c|c|c|}
  \hline
  & & $H_0$ & $H_1$ & $H_0$-and-$H_1$ \\
  \hline
  Supervised learning & Nearest neighbors & 0.631 & 0.260 & 0.262\\
  & Support Vector Regression & 0.541 & 0.171 & 0.171\\
  \hline
  Unsupervised learning & First Principal Component & 0.615 & 0.393 &  0.392\\
  \hline
\end{tabular}
\label{tab:rmse-rank}                                              
\end{table}

We sort the nonzero pairwise distances and replace them with their corresponding ordinal numbers. 
Thus, for each curvature, the set of nonzero pairwise distances is the set $\{1,2,3,\ldots,\binom{m}{2}\}$.
We redo all of the computations in Sections \ref{sec:supervised} and \ref{sec:unsupervised} in this setting.
For nearest neighbors we use the five nearest neighbors.
For support vector regression we use cost 10 and an $\varepsilon$-insensitive linear loss function with $\varepsilon = 1$ for $H_0$, $\varepsilon=0.2$ for $H_1$, and $\varepsilon=0.2$ for $H_0$ and $H_1$.
We choose different hyper-parameters to avoid over-fitting due to the greater variance in the data.
The results are given in Figure~\ref{fig:scatterplot-rank} and Table~\ref{tab:rmse-rank}.

\subsection*{Acknowledgments}

This research was partially supported by the Southeast Center for Mathematics and Biology, an NSF-Simons Research Center for Mathematics of Complex Biological Systems, under National Science Foundation Grant No. DMS-1764406 and Simons Foundation Grant No. 594594.
This material is based upon work supported by, or in part by, the Army Research Laboratory and the Army Research Office under contract/grant number
W911NF-18-1-0307.
We would also like to thank the anonymous referees for their helpful comments.


\end{document}